\documentclass[lettersize,journal]{IEEEtran}
\usepackage{amsmath,amsfonts}
\usepackage{algorithmic}
\usepackage{algorithm}
\usepackage{xcolor}
\usepackage{array}
\usepackage[caption=false,font=normalsize,labelfont=sf,textfont=sf]{subfig}
\usepackage{amssymb}
\usepackage{textcomp}
\usepackage{stfloats}
\usepackage{url}
\usepackage{verbatim}
\usepackage{graphicx}
\usepackage{cite}
\usepackage{cleveref}
\usepackage{amsthm} 

\newtheorem{lemma}{Lemma}
\newtheorem{theorem}{Theorem}

\newtheorem{assumption}{Assumption}

\begin{document}

\title{Context-Aware Search and Retrieval Under Token Erasure
\thanks{This work was in part supported by US NSF grants 2107370 and 2201824.}
 \thanks{This paper was presented in part at IEEE Information Theory Workshop (ITW),  Sydney, Australia, October 2025 [21].}
}
\author{Sara Ghasvarianjahromi, Joshua Barr, Yauhen Yakimenka, and J\"org Kliewer\\
\IEEEauthorblockA{Helen and John C. Hartmann Department of Electrical and Computer Engineering\\ New Jersey Institute of Technology, Newark, New Jersey, 07102, USA
\\Email: \{sg273, jb794, yauhen.yakimenka, jkliewer\}@njit.edu
}
\thanks{}
\thanks{}}



\maketitle

\begin{abstract}
 This paper introduces and analyzes a search and retrieval model for RAG-like systems under {token} erasures. We provide an information-theoretic analysis of remote document retrieval when query representations are only partially preserved. The query is represented using term-frequency-based features, and semantically adaptive redundancy is assigned according to feature importance. Retrieval is performed using TF-IDF-weighted similarity.
We characterize the retrieval error probability by showing that the vector of similarity margins converges to a multivariate Gaussian distribution, yielding an explicit approximation and computable upper bounds. Numerical results support the analysis, while a separate data-driven evaluation using embedding-based retrieval on real-world data shows that the same importance-aware redundancy principles extend to modern retrieval pipelines. Overall, the results show that assigning higher redundancy to semantically important query features improves retrieval reliability.

\end{abstract}
\begin{IEEEkeywords}
Information retrieval, retrieval-augmented generation, semantic communication, importance-aware encoding.
\end{IEEEkeywords}
\section{Introduction}
\label{sec: Introduction}
Search and retrieval systems play an important role in modern information processing, with applications ranging from web search engines to question answering and recommendation systems \cite{buttcher2016information,siriwardhana2023improving}. In particular, retrieval-augmented generation (RAG) architectures use feature-based retrieval components to select documents that provide relevant context for downstream processing \cite{dimitrakis2020survey,zhu2021retrieving,lan2021semantic,gao2023retrieval,salemi2024evaluating}. These components typically represent queries and documents using sparse or dense feature vectors, such as term frequency--inverse document frequency (TF-IDF), Best Matching 25 (BM25), or learned embeddings, and determine relevance through similarity measures including cosine similarity or $\ell_2$ distance \cite{li2024enhancing,knollmeyer2025hybrid}.

In this paper, we provide an information-theoretic characterization of retrieval error in RAG-like systems under token erasures. The results establish a theoretical foundation for importance-aware redundancy allocation in retrieval systems and clarify how semantic importance should be quantified when retrieval reliability is the primary objective. In particular, we analyze how random token erasures affect similarity-based retrieval decisions and the probability of selecting an incorrect document.

This problem is closely related to semantic communication, which emphasizes preserving task-relevant meaning rather than exact data fidelity \cite{shi2021semantic,luo2022semantic,lu2023semantics}. Early works in this area focused on representing information in a more semantically efficient manner \cite{lan2021semantic}, while more recent approaches have leveraged deep learning and generative models to develop end-to-end semantic processing frameworks \cite{yang2022semantic,chaccour2024less}. A notable direction in this literature is importance-aware semantic representation guided by pre-trained language models, where semantic relevance is used to prioritize the most informative content \cite{guo2023semantic}. A growing body of work has also investigated token-level semantic representations, in which tokens serve as fundamental semantic units and contextual modeling is used to improve efficiency and robustness \cite{qiao2025token,liu2025text}.

These ideas are particularly relevant to retrieval systems used in RAG pipelines, where query and document features must be conveyed reliably enough to preserve retrieval accuracy. However, while recent work has primarily focused on evaluating retrieval performance in RAG-style systems \cite{salemi2024evaluating,xian2024vector,lin2025gosling,wang2025balancing}, the retrieval component itself has rarely been analyzed from an information-theoretic perspective. In particular, the effect of random token erasures on similarity-based retrieval decisions and the resulting retrieval error probability has not been systematically characterized.

In earlier work \cite{ghasvarianjahromi2025context}, we took a first step toward addressing this gap by introducing a simplified analytical model for retrieval under token erasures. That work focused on a minimal setting with two candidate documents, TF-IDF-based query features, repetition-based redundancy, and squared $\ell_2$ similarity. By modeling the distortion induced by token erasures, we showed that the original and observed similarity margins converge jointly to a Gaussian distribution, enabling a closed-form approximation of the retrieval error probability. The analysis demonstrated that allocating higher redundancy to more important query features significantly reduces the probability of retrieving the wrong document.

This paper substantially extends that preliminary study. Most importantly, the analysis here applies to an arbitrary number of documents. This generalization is nontrivial, as it requires characterizing the joint distribution of multiple similarity margins and controlling the resulting high-dimensional error events. To this end, we develop a multivariate Gaussian approximation of the similarity-margin vector and derive computable upper bounds on the retrieval error probability using Bonferroni-type expansions and the \v{S}id\'ak inequality. These results provide a principled understanding of how feature importance, redundancy allocation, and query support size interact in large-scale retrieval settings.

In addition to the theoretical analysis, we complement the TF-IDF-based model with a data-driven evaluation using embedding-based retrieval on real-world data. While the analytical results are derived under a TF-IDF abstraction, the experiments demonstrate that the same importance-aware transmission principles extend to modern embedding-based retrieval pipelines. This separation clarifies the scope of the theory while illustrating the broader relevance of the proposed design approach.

The rest of this paper is organized as follows. In Section~\ref{sec:problem} we introduce the notation and terminology, the basic definitions and describe the overall system model. In Sections~\ref{sec:tfidf_analysis} and \ref{sec:embedding_analysis} we provide a detailed modeling and analysis of the individual system components specified for TF-IDF-based and embedding-based settings, respectively. Section~\ref{sec: error_pr_MVN} derives the retrieval error probability using a multivariate Gaussian approximation of the score-margin vector, establishes computable upper bounds using Bonferroni-type expansions and the \v{S}id\'ak inequality, and includes a complexity comparison of the resulting approximations in Subsection~\ref{subsec:complexity}. Section~\ref{sec: numerical_results} presents numerical results for the TF-IDF-based model together with a separate data-driven evaluation using an embedding-based retrieval pipeline. Finally, the paper is concluded in Section~\ref{sec: conclusion}.
\vspace{-6pt}
\section{Problem Statement}
\label{sec:problem}
\subsection{Notation \& Terminology}
Bold lowercase letters denote vectors, and $(\cdot)^\top$ denotes transpose. For a vector $\mathbf a$, we write $\operatorname{supp}(\mathbf a)$ for the set of indices of its nonzero entries, and $|\operatorname{supp}(\mathbf a)|$ for the cardinality of this support set. We denote by $\|\mathbf a\|_2 := \sqrt{\mathbf a^\top \mathbf a}$ the Euclidean norm. 
Finally, $\circ$ denotes the Hadamard product.

To maintain consistent terminology across the two retrieval settings studied in this paper, we use the
term \emph{token} to denote a basic transmitted query unit. 
In the TF-IDF-based setting, this unit corresponds to an indexed feature
associated with a query term, whereas in the embedding-based
setting it corresponds to a standard tokenizer-defined
text token.
Likewise, we use \emph{representation computation} as a common name for the block that produces the query-side and document-side representations used for retrieval. In the TF-IDF-based setting, this block corresponds to weighting of the recovered query features, whereas in the embedding-based setting it corresponds to encoder-based embedding generation from the surviving query tokens.
\vspace{-7pt}
\subsection{System Model}
\label{sec:system_model}

\begin{figure*}[t]
    \centering
    \includegraphics[width=\textwidth]{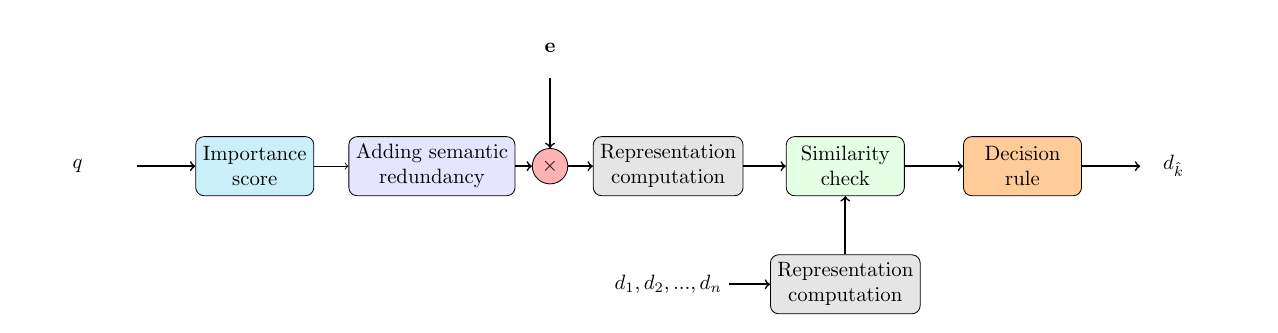}
    \caption{Remote document retrieval system. The tokenized query is first converted into a feature representation and augmented with importance-dependent redundancy. Random token erasures then affect this representation. Retrieval is performed by comparing the resulting query representation with document feature representations and selecting a document according to a predefined decision rule.}
    \label{fig:sm}
\end{figure*}
We consider a remote document retrieval system designed to identify the document most relevant to a given query $q$ consisting of $L_q$ terms. Here, ``terms'' refers to basic textual units such as words or tokens, depending on the retrieval model. The system consists of a transmitter, a receiver, and a token erasure process, as illustrated in Fig.~\ref{fig:sm}. 
Let $\mathcal{V}=\{t_1,t_2,\dots,t_N\}$ denote the vocabulary of $N$ terms from which queries and documents are formed. A query $q$ and each document $d_i$ in the corpus are formed from elements of $\mathcal{V}$.

At the transmitter, the input query $q$ is passed through an importance scoring block, which determines the relative importance of its terms. This representation may take different forms depending on the retrieval model. For example, it may correspond to a sparse term-based representation in the TF-IDF setting or to a semantic representation in the embedding-based setting. In either case, the resulting representation is intended to capture the retrieval-relevant information contained in the query and to guide the subsequent allocation of transmission redundancy.

The query information is then augmented with importance-aware redundancy to improve robustness to random erasures. Specifically, repetitions are assigned according to the importance of individual query tokens. We assume a token erasure model with probability $\epsilon$, under which some transmitted query tokens may be unavailable for retrieval. 
Such erasures may arise, for example, from packet drops caused by link outages, wireless impairments, or network congestion.

The query information available for retrieval is determined by the subset of unerased query tokens.
The quality of the resulting query representation depends on both the redundancy assigned to individual query tokens and the erasure pattern affecting them. 
The reconstructed information is then passed through an representation computation block, interpreted according to the retrieval model under consideration. In the TF-IDF case, this step combines the recovered term information with the corresponding IDF weights to form the representation used for retrieval, whereas in the embedding-based case it computes the embedding associated with the successfully received tokens. The resulting query-side representation is then used for comparison with the document representations.

In parallel, each document $d_i \in \mathcal{D}=\{d_1,d_2,\dots,d_n\}$ is converted into the representation used for retrieval by the relevant representation computation method, either TF-IDF or embeddings.
The receiver then performs a similarity check between the reconstructed query-side representation and each document representations.
Based on the resulting similarity or distance scores, a decision rule is applied to select the retrieved document, denoted by $d_{\hat{k}}$.

Thus, the overall system consists of the following common stages: importance scoring, importance-aware redundancy assignment, random token erasures affecting the query representation, representation computation for queries and documents, and retrieval through similarity comparison and decision making. This formulation is intentionally model-agnostic and serves as a common framework for the two retrieval settings studied in this paper. In the TF-IDF case, the importance scoring, redundancy allocation, and similarity measure are specified using a term-based analytical model over $\mathcal{V}$. In the data-driven case, these components are instantiated using learned embedding representations and the corresponding retrieval metric.
\vspace{-7pt}
\section{Analysis}
In this section, we specialize the general system model in Fig.~\ref{fig:sm} to two retrieval settings: the TF-IDF-based setting and the embedding-based setting, and analyze the corresponding system components in each case.
In both settings, the transmitter mitigates the effect of erasures by assigning semantic repetition budget according to the importance of the query information. The key difference lies in the object being protected and the representation used for retrieval. In the TF-IDF-based setting, erasures act on sparse TF-based feature coordinates, and retrieval is performed using a TF-IDF-weighted distance measure. In the embedding-based setting, erasures act on tokenizer-defined text tokens, and retrieval is performed using encoder-generated dense embeddings together with cosine similarity.
\vspace{-7pt}
\subsection{TF-IDF-Based Retrieval}
\label{sec:tfidf_analysis}
 We begin by characterizing the statistical structure of the vocabulary and the query generation process, which motivates the term-frequency representation adopted in this section. We then describe the TF-based importance scoring and the corresponding repetition-based redundancy assignment, where the number of repetitions is determined by term importance. Next, we model token erasures and their effect on the query representation available for retrieval. We then introduce the TF-IDF-specific representation computation used for retrieval. Finally, we formalize the TF-IDF-weighted similarity computation and the minimum-distance decision rule used for document selection, which together form the basis for the retrieval error analysis developed in the following sections.
\subsubsection{\textbf{Vocabulary and Query}}
As described in the system model, we assume a predefined vocabulary $\mathcal{V}$ with $N$ distinct terms.
For example, in a natural language processing (NLP) setting, the vocabulary may consist of all words in the English language.
Empirical studies show that term frequencies in such corpora follow Zipf’s law \cite{piantadosi2014zipf}, a heavy-tailed distribution.
According to this law, each term $t_i \in \mathcal{V}$ is assigned a unique rank $i \in \{1,2,\dots,N\}$ based on its frequency in the vocabulary, with $t_1$ being the most frequent term and $t_N$ being the least frequent.
The frequency of a term is inversely proportional to its rank $i$ and satisfies $f_i \propto \frac{1}{i^\alpha}$,
where $\alpha \geq 0$ is the Zipfian exponent controlling the decay rate.
This relationship characterizes the heavy-tailed distribution observed in natural language, where a small number of terms occur very frequently while the majority appear rarely.
To convert these frequencies into a probability distribution over $\mathcal{V}$, we introduce the normalizing constant $\sum_{j=1}^N 1/j^\alpha$.
The probability of selecting a term at rank $i$, in accordance with Zipf’s law, is then given by
\begin{equation}
    \pi_i = \frac{1/i^\alpha}{\sum_{j=1}^N 1/j^\alpha}.
    \label{eq:zipfs_distribution}
\end{equation}

To model a query $q=(w_1,w_2,\dots,w_{L_q})$, we assume that its terms are drawn independently from the vocabulary $\mathcal{V}$ according to the Zipf distribution.
The real queries exhibit term dependencies, but the i.i.d. model provides analytical tractability and acts as a first-order approximation.
This modeling choice is consistent with empirical observations that natural-language queries inherit the same heavy-tailed term-frequency structure as the underlying corpus; consequently, sampling from Zipf’s law provides a realistic model for query term occurrences and preserves the dominance of high-frequency words.
Accordingly, the vector $\mathbf{c}_q \in \mathbb{R}^N$ represents the count of each term in the query, and the frequency with which each term appears is modeled as
$\mathbf{c}_q \sim \mathrm{Multinomial}(L_q,\{\pi_1,\dots,\pi_N\})$,
where $L_q$ is query length and $\{\pi_1,\dots,\pi_N\}$ denotes the probabilities of the vocabulary terms determined by their ranks according to Zipf’s law in \eqref{eq:zipfs_distribution}.
\subsubsection{\textbf{Importance Score}}
In the TF-IDF setting, the importance of each query term is quantified through its term frequency (TF). To obtain the relative frequencies of terms in the query, we normalize the count vector $\mathbf{c}_q$ by dividing the count of each term by the query length $L_q$. This yields the normalized vector
$\widetilde{\mathbf{c}}_q = \frac{\mathbf{c}_q}{L_q}$,
which satisfies $\sum_{i=1}^N \widetilde{c}_{q,i} = 1$.
This normalization step produces a probability-like representation of the query content and is consistent with the standard TF method commonly used in natural language processing and information retrieval \cite{mishra2015analysis,ibrihich2022review}.

According to this definition, the TF value associated with term $i$ in query $q$ is given by
\begin{equation}
    \mathrm{TF}(i) = \frac{c_{q,i}}{L_q},
\end{equation}
where $c_{q,i}$ denotes the number of occurrences of term $i$ in the query $q$.
Thus, in the TF-IDF case, the term-frequency values serve as the importance scores that determine the relative contribution of the query terms and guide the subsequent allocation of transmission resources.
\subsubsection{\textbf{Adaptive Repetition Under Token Erasures}}
\label{subsec:token_erasure_model}
To improve robustness to token erasures while prioritizing the most informative parts of the query, we first derive a modified query representation $\mathbf{v}_q$ from the normalized count vector $\widetilde{\mathbf{c}}_q$. Specifically, we suppress a set of $l_s$ high-frequency but low-informative terms, commonly referred to as stop words in NLP \cite{fox1989stop}, by forcing their corresponding coordinates to zero. Let $\mathcal{L}_s \subseteq \{1,2,\dots,N\}$ denote the stop-word index set. To formalize this operation, we define a coordinate-wise masking operator $G:\mathbb{R}^N \to \mathbb{R}^N$ acting componentwise as
\begin{equation*}
G(x)_i =
\begin{cases}
0, & \text{if } i \in \mathcal{L}_s,\\[3pt]
x_i, & \text{otherwise}.
\end{cases}
\end{equation*}
The resulting stop-word-filtered query representation is
\begin{equation*}
\mathbf{v}_q = G(\widetilde{\mathbf{c}}_q).
\end{equation*}
This step removes terms that contribute little to retrieval discrimination while still consuming resources, thereby reducing overhead with limited impact on retrieval accuracy.

Although the original vector $\widetilde{\mathbf{c}}_q$ satisfies $\sum_{i=1}^N \widetilde{c}_{q,i} = 1$, the modified vector $\mathbf{v}_q$ generally satisfies $\sum_{i=1}^N v_{q,i} < 1$ because the stop-word coordinates are set to zero. Let
\begin{equation*}
\mathcal{S}_q \triangleq \mathrm{supp}(\mathbf{v}_q)=\{i:v_{q,i}\neq 0\}
\end{equation*}
and let $K_q=|\mathcal{S}_q|$ denote the number of active coordinates after stop-word removal and $M$ is the number of remaining token.

Rather than working with the full $N$-dimensional vector, we represent the query sparsely through the index--value pairs $(i,v_{q,i})$ for all $i\in\mathcal{S}_q$. This keeps only the TF-weighted coordinates that are relevant for retrieval. To make this sparse representation more robust to token erasures, we assign $r_i$ repetitions to each pair $(i,v_{q,i})$, with $r_i=0$ whenever $v_{q,i}=0$. To control the overall repetition budget, we define the design rate
\begin{equation*}
R = \frac{M}{\sum_{i=1}^N r_i},
\end{equation*}
Since coordinates with larger TF values typically play a more important role in retrieval, we assign repetitions proportionally to their TF values:
\begin{equation}
r_i = \left\lceil \frac{M}{R \sum_{j=1}^N v_{q,j}} \, v_{q,i} \right\rceil.
\label{eq:repetition}
\end{equation}
We restrict attention to queries with at least one non-stopword term, that is, $M \geq 1$. Without integer rounding, this allocation gives exactly $\sum_i r_i = M/R$. Since $\lceil x\rceil \ge x$, the rounded allocation satisfies $\sum_i r_i \ge M/R$, so the effective rate is upper bounded by $R$.

Next, we model the effect of token erasures on this repeated sparse representation. We assume that erasures occur independently across repetitions and across coordinates, for the sake of analytical traceability. If the pair $(i,v_{q,i})$ is repeated $r_i$ times, then the probability that at least one copy remains available is $1-\epsilon^{r_i}$, whereas the probability that all copies are erased is $\epsilon^{r_i}$.
To formalize this, we define an indicator vector $\mathbf{e}=[e_i]_{i=1}^N$ with independent components distributed as
\begin{equation}
e_i =
\begin{cases}
1, & \text{with probability } 1-\epsilon^{r_i}, \\[3pt]
0, & \text{with probability } \epsilon^{r_i}.
\end{cases}
\label{eq:rep_effect}
\end{equation}
Here, $e_i=1$ indicates that at least one repetition of $(i,v_{q,i})$ is retained, whereas $e_i=0$ indicates that all repetitions associated with coordinate $i$ are erased.

Accordingly, the query representation available for retrieval is obtained by preserving coordinate $i$ when $e_i=1$ and setting it to zero otherwise:
\begin{equation*}
\hat{v}_{q,i} =
\begin{cases}
v_{q,i}, & \text{if } e_i = 1, \\[3pt]
0, & \text{if } e_i = 0.
\end{cases}
\end{equation*}
Equivalently, we can write it as $\hat{\mathbf{v}}_q = \mathbf{v}_q \circ \mathbf{e}$.
 This expression makes explicit that token erasures independently preserve or remove the active coordinates of $\mathbf{v}_q$ according to the repetition-dependent probabilities in \eqref{eq:rep_effect}.
\subsubsection{\textbf{Representation Computation}}
\label{subsec:embedding_computation_tfidf}
We next specify the representation computation block in Fig.~\ref{fig:sm} for the TF-IDF retrieval setting. In this case, the block incorporates the inverse document frequency (IDF) information associated with the vocabulary terms. Specifically, let
\begin{equation}
    \mathrm{IDF}(i)
    \;=\;
    \log\!\left(\frac{n+1}{n_i+1}\right)
    \triangleq
    \zeta_i,
    \label{eq: IDF}
\end{equation}
where $n_i$ denotes the number of documents in the corpus $\mathcal{D}$ in which term $t_i$ appears. Let $\boldsymbol{\zeta}\in\mathbb{R}^N$ denote the vector of IDF weights.

The TF-IDF representation associated with the query vector available for retrieval is obtained by weighting each coordinate of $\hat{\mathbf{v}}_q$ by its corresponding IDF value. Equivalently, the query-side representation used for retrieval is $\hat{\mathbf{v}}_q \circ \boldsymbol{\zeta}$.
Thus, in the TF-IDF case, the representation computation block transforms the TF-based query representation into a TF-IDF-weighted representation that emphasizes rare and informative terms in the subsequent similarity computation.
\subsubsection{\textbf{Similarity Check and Decision Rule}}

At the receiver, the objective is to identify the document $d_{\hat{k}}$ from the corpus 
$\mathcal{D} = \{d_1, d_2, \dots, d_n\}$ that is most relevant to the transmitted query $q$.  
We assume that all unique terms appearing in the corpus are contained in the predefined vocabulary 
$\mathcal{V}$, i.e., $\bigcup_{j=1}^{n} \mathcal{T}_{d_j} \subseteq \mathcal{V}$, where $\mathcal{T}_{d_j} = \{\, t \mid t \in d_j \,\}$ denotes the set of terms in document $d_j$.

To compare the reconstructed query representation with the corpus, the receiver computes the term-frequency (TF) vectors for all documents, denoted
$\mathbf{v}_{d_1}, \mathbf{v}_{d_2}, \dots, \mathbf{v}_{d_n}$. Using the IDF weights defined in \eqref{eq: IDF}, the receiver then evaluates the TF-IDF-weighted squared $\ell_2$ distance between the reconstructed query vector $\hat{\mathbf{v}}_q$ and each document vector $\mathbf{v}_{d_j}$:
\begin{equation}
    \hat{s}_j
    \;=\;
    \sum_{i \in \mathcal{S}_q}
    \zeta_i^{2}
    \bigl(\hat{v}_{q,i} - v_{d_j,i}\bigr)^2,
    \quad j = 1,2,\dots,n.
    \label{eq: l2_distance}
\end{equation}
Only coordinates in $\mathcal{S}_q$ are evaluated, reflecting that transmission and reconstruction occur solely over these indices. We define the retrieval score over $\mathcal{S}_q$ to focus the analysis on the interaction between query features and document representations, excluding coordinates where the reconstructed query carries no information. In practice, this requires knowledge of $\mathcal{S}_q$ at the receiver, which can be ensured by communicating the query support as low-rate side information.

The document achieving the smallest value of $\hat{s}_j$ is selected as the most relevant:
\begin{equation}
    \hat{k}
    \;=\;
    \arg\min_{j \in \{1,2,\dots,n\}} \hat{s}_j.
\end{equation}
Thus, $d_{\hat{k}}$ is the document whose TF-IDF representation is closest to the reconstructed query vector under the minimum-distance decision rule. Ties, if any, are broken using a fixed tie-breaking rule, for example uniformly at random among tied candidates, so that the selected index is well-defined.
\vspace{-35pt}
\subsection{Embedding-Based Retrieval}
\label{sec:embedding_analysis}
\vspace{-7pt}
In modern NLP, pretrained encoder models such as sentence transformers \cite{reimers2019sentence} map variable-length text sequences into fixed-dimensional dense vectors, known as embeddings. These models are trained so that semantically related texts are mapped to nearby points in the embedding space, with cosine similarity serving as a standard measure of semantic proximity \cite{mikolov2013efficient}. Unlike sparse and interpretable TF-IDF representations, embeddings capture contextual and distributional meaning but are generally not directly interpretable at the coordinate level.
This embedding-based view is particularly relevant for modern retrieval pipelines, where queries and documents are represented in a dense semantic space rather than through sparse term-based features. In this section, we therefore specialize the general system model in Fig.~\ref{fig:sm} to the embedding-based retrieval setting and describe the components that differ from the TF-IDF case.
\subsubsection{\textbf{Importance Score}}

Let $q=(w_1,w_2,\dots,w_{L_q})$ denote the input query, where the terms are now processed as tokens under the tokenizer associated with the pretrained embedding model. For each query, we first compute the full-query embedding
$\mathbf{z}_{\mathrm{full}}$.
To quantify the semantic importance of token $i$, we remove that token from the query, recompute the embedding of the shortened query, and denote the result by $\mathbf{z}_{\setminus i}$.
The semantic loss induced by removing token $i$ is then measured using cosine similarity as \cite{guo2023semantic}
\begin{equation}
\mathrm{score}_i
=
1-
\frac{\mathbf{z}_{\mathrm{full}}^{\top}\mathbf{z}_{\setminus i}}
{\|\mathbf{z}_{\mathrm{full}}\|_2\,\|\mathbf{z}_{\setminus i}\|_2}.
\label{eq:semantic_importance}
\end{equation}
A large value of $\mathrm{score}_i$ indicates that removing token $i$ causes the query embedding to shift substantially, meaning that the token carries significant semantic content relative to the full query.
Repeating this procedure for all query tokens yields a semantic-importance score for each token, where larger values of $\mathrm{score}_i$ indicate greater semantic contribution to the full-query embedding.
\subsubsection{\textbf{Adaptive Repetition Under Token Erasures
}}
\label{subsec:token_erasure_embedding}
As in the TF-IDF case, redundancy is introduced to improve robustness to token erasures, but the repeated units are now query tokens rather than TF-weighted coordinates. Given the semantic-importance scores $\{\mathrm{score}_i\}_{i=1}^{L_q}$ and a total repetition budget $B=L_q/R$, we assign an integer repetition count $r_i$ to each token so as to maximize the expected preserved semantic content after erasures. Assuming independent erasures with probability $\epsilon$ for each repeated copy, the expected retained contribution of token $i$ is $(1-\epsilon^{r_i})\,\mathrm{score}_i$.

Since erasures are independent across tokens, the expected retained semantic content decomposes as a sum over tokens, which leads to the following separable optimization problem:
\begin{equation}
\begin{aligned}
\max_{r_1,\ldots,r_{L_q}} \qquad
    & \sum_{i=1}^{L_q} \bigl(1-\epsilon^{r_i}\bigr)\,\mathrm{score}_i \\[3pt]
\text{s.t.} \qquad
    & \sum_{i=1}^{L_q} r_i = B, \\[2pt]
    & r_i \in \mathbb{Z}_{\ge 0}, \qquad i=1,\ldots,L_q.
\end{aligned}
\label{eq:dp_allocation}
\end{equation}
This discrete optimization problem is solved using dynamic programming because the repetition counts are integer-valued and coupled through a total redundancy-budget constraint \cite{bellman1957dynamic}. The procedure computes the optimal value recursively over tokens and remaining budget, and then backtracks to obtain the repetition assignment. The resulting solution is importance-aware, assigning more redundancy to tokens with larger semantic contribution. In some cases, a token may be assigned $r_i=0$, meaning that it is omitted from the repeated representation. This typically occurs for tokens with negligible semantic contribution, analogous to low-informative terms in the TF-IDF case.

We next model the effect of token erasures on this repeated token representation. We assume independent erasures across repeated copies, with erasure probability $\epsilon$ per copy. A token is retained if at least one of its $r_i$ copies remains available, and is otherwise removed. Thus, the probability that token $i$ is preserved is $1-\epsilon^{r_i}$, whereas the probability that it is completely erased is $\epsilon^{r_i}$.

 In the embedding-based setting, the query representation used for retrieval is obtained by recomputing the query embedding from the surviving token sequence. We denote the resulting post-erasure query embedding by $\widehat{\mathbf{z}}$. Therefore, unlike the TF-IDF case, where erasures act directly on sparse feature coordinates, the embedding-based pipeline forms the query representation used for retrieval by re-embedding the surviving query tokens.
\subsubsection{\textbf{Similarity Check and Decision Rule}}

Let $\mathbf{z}_{d_j}$ denote the embedding of document $d_j \in \mathcal{D}$ obtained from the same pretrained encoder. Retrieval is performed by comparing the reconstructed query embedding $\widehat{\mathbf{z}}$ with the document embeddings using cosine similarity. Specifically, the score assigned to document $d_j$ is
\begin{equation}
\hat{s}_j
= {1 -}
\frac{\widehat{\mathbf{z}}^{\top}\mathbf{z}_{d_j}}
{\|\widehat{\mathbf{z}}\|_2\,\|\mathbf{z}_{d_j}\|_2},
\qquad j=1,2,\dots,n.
\label{eq:embedding_similarity}
\end{equation}
The {document with the smallest $\hat s_j$ is} retrieved:
\begin{equation}
\hat{k}
=
\arg\min_{j\in\{1,2,\dots,n\}} \hat{s}_j.
\end{equation}
To define the reference relevant document, we use the ranking induced by the full-query embedding $\mathbf{z}_{\mathrm{full}}$. In other words, the ground-truth document is taken to be the one that would be retrieved in the absence of channel erasures.

\vspace{-7pt}
\section{Theoretical Results}
\label{sec: error_pr_MVN}
In this section, we develop theoretical results for the \emph{TF-IDF-based retrieval framework} introduced in Section~\ref{sec:tfidf_analysis}. Specifically, we characterize the probability of retrieval error for remote document retrieval over an erasure channel, defined as the probability that the receiver selects a document different from the ground-truth choice obtained under perfect (no-erasure) conditions. Our analysis proceeds in three steps. First, conditioned on the query representation $\mathbf v_q$, we establish a multivariate Gaussian approximation for the TF-IDF score-margin vector induced by the erasure channel. Second, we use this approximation to characterize the resulting error probability through a Gaussian orthant probability. Third, we derive computable analytic approximations and bounds based on classical tools for Gaussian orthant probabilities, in particular Bonferroni-type expansions \cite{feller1991introduction} and the \v{S}id\'ak inequality \cite{vsidak1967rectangular}.
\vspace{-7pt}
\subsection{Preliminaries}
\label{subsec:preliminaries}
In this subsection, we introduce the score-margin representation, its coordinatewise decomposition, and the associated first- and second-order moments that will be used in the probability-of-error analysis.

For each competing document $j \neq k$, where 
$k$ denotes the ground-truth document, define the score margin
\begin{equation}
\hat{\Delta}_j
\triangleq
\hat{s}_j - \hat{s}_k ,
\label{eq:margin_j}
\end{equation}
where $\hat{s}_k$ denotes the reconstructed score of the ground-truth document, and $\hat{s}_j$ denotes the reconstructed score of competitor $j$. 
Expanding the score margin using \eqref{eq: l2_distance} and substituting $\hat v_{q,i}=e_i v_{q,i}$, we obtain
\begin{equation}
\hat{\Delta}_j
=
\sum_{i\in \mathcal S_q}
\left[
\zeta_i^{2}\bigl[v_{d_j,i}^2-v_{d_k,i}^2\bigr]
+
e_i\,2\zeta_i^2 v_{q,i}\bigl(v_{d_k,i}-v_{d_j,i}\bigr)
\right],
\label{eq:Delta_affine}
\end{equation}
where $\{e_i\}_{i\in \mathcal S_q}$ are independent Bernoulli random variables.
Thus, for each $j\neq k$, we may rewrite $\hat{\Delta}_j$ as distinct weighted sums of the same underlying erasure indicators $\{e_i\}_{i\in \mathcal S_q}$,
\begin{equation}
\hat{\Delta}_j
=
\sum_{i\in \mathcal S_q}\bigl( C_{j,i}+e_i\delta_{j,i}\bigr),
\end{equation}
where
\[
C_{j,i}
=
\zeta_i^{2}\bigl(v_{d_j,i}^2-v_{d_k,i}^2\bigr),
\quad
\delta_{j,i}
=
2\zeta_i^2 v_{q,i}\bigl(v_{d_k,i}-v_{d_j,i}\bigr).
\]

Stacking the $m=n-1$ competing margins into a vector yields
\begin{equation}
\widehat{\boldsymbol{\Delta}}
=
\bigl(\hat{\Delta}_j\bigr)_{j\neq k}
\in
\mathbb{R}^{m},
\end{equation}
which can be written as a sum of independent, though not identically distributed, random vectors indexed by the query coordinates
\begin{equation}
\widehat{\boldsymbol{\Delta}}
=
\sum_{i\in \mathcal S_q}
\bigl(\mathbf C_i + e_i\,\boldsymbol{\delta}_i\bigr),
\label{eq:Delta_C_delta_vec}
\end{equation}
where
\begin{equation}
\label{eq:delta_i}
   \mathbf C_i := (C_{j,i})_{j\neq k}, \quad \boldsymbol{\delta}_i := (\delta_{j,i})_{j\neq k},
\end{equation}
with \(\mathbf C_i\) denoting the deterministic baseline contributions of coordinate \(i\) to the competing score margins, and \(\boldsymbol{\delta}_i\) the corresponding transmission-dependent contributuin.
Defining $\mathbf U_i := \mathbf C_i + e_i\,\boldsymbol{\delta}_i$,
we obtain
\begin{equation}
\widehat{\boldsymbol{\Delta}}
=
\sum_{i\in \mathcal S_q} \mathbf U_i.
\label{eq:margin_vec}
\end{equation}

From \eqref{eq:Delta_C_delta_vec} and the facts that $\mathbb{E}[e_i]=p_i$ and $\mathrm{Var}(e_i)=p_i(1-p_i)$, it is easy to see that the conditional mean and covariance matrix of $\widehat{\boldsymbol{\Delta}}$ given $\mathbf v_q$ are
\begin{equation}
\boldsymbol{\mu}(\mathbf v_q)
=
\mathbb{E}[\widehat{\boldsymbol{\Delta}}\mid \mathbf v_q]
=
\sum_{i\in \mathcal S_q}
\bigl(\mathbf C_i + p_i\,\boldsymbol{\delta}_i\bigr),
\label{eq:mean}
\end{equation}
and
\begin{equation}
\boldsymbol{\Sigma}(\mathbf v_q)
=
\mathrm{Cov}(\widehat{\boldsymbol{\Delta}}\mid \mathbf v_q)
=
\sum_{i\in \mathcal S_q}
p_i(1-p_i)
\boldsymbol{\delta}_i
\boldsymbol{\delta}_i^\top.
\label{eq:Sigma_vq}
\end{equation}
Also, define the centered random vectors
\begin{equation}
\mathbf Y_i
:=
\mathbf U_i - \mathbb E[\mathbf U_i \mid \mathbf v_q]
=
(e_i - p_i)\,\boldsymbol\delta_i.
\label{eq:Yi_def}
\end{equation}
Then
\begin{equation}
\mathbb E[\mathbf Y_i \mid \mathbf v_q] = \mathbf 0,
\quad
\mathrm{Cov}(\mathbf Y_i \mid \mathbf v_q)
=
p_i(1-p_i)\,
\boldsymbol\delta_i
\boldsymbol\delta_i^\top .
\label{eq:Yi_cov}
\end{equation}
\begin{assumption}[Asymptotic framework]
\label{assump:asymptotic_framework}
All identities above hold for fixed vocabulary size $N$. The asymptotic statements below are interpreted along a sequence of problem instances indexed by $t$, with vocabulary sizes $N_t\to\infty$ and query representations $\mathbf v_q^{(t)}$ having supports $\mathcal S_q^{(t)} \subseteq [N_t]$ such that
$K_q^{(t)} \triangleq |\mathcal S_q^{(t)}| \to \infty$.
For each $t$, all probabilistic statements are understood conditionally on the realized query representation $\mathbf v_q^{(t)}$, and hence on the induced redundancy levels $\{r_i^{(t)}\}$ and corresponding survival probabilities $\{p_i^{(t)}\}$. For notational simplicity, we suppress the index $t$ and write $N$, $\mathcal S_q$, and $K_q = |\mathcal S_q|$, with all limits understood along this sequence.
\end{assumption}
Since \(\widehat{\boldsymbol\Delta}\) is a sum of independent but generally non-identically distributed random vectors, a standard i.i.d.\ central limit theorem does not apply directly. To establish asymptotic Gaussianity, we therefore impose the following directional Lindeberg condition on the projected summands, which ensures that in every fixed projection \(\mathbf{u}^\top\widehat{\boldsymbol\Delta}\), no single coordinate contribution \(\mathbf{u}^\top \mathbf Y_i\) carries a non-negligible fraction of the total variance \cite{van2000asymptotic}.
\begin{lemma}[Directional Lindeberg condition]
\label{lem:lindeberg}
Under Assumption~\ref{assump:asymptotic_framework}, conditioned on \(\mathbf v_q\), for every fixed unit vector \(\mathbf u\in\mathbb R^m\) and every \(\eta>0\),
\begin{equation}
\label{eq:lindeberg_directional}
\begin{aligned}
\frac{1}{\mathbf{u}^\top \boldsymbol\Sigma(\mathbf v_q) \mathbf{u}}
\sum_{i\in \mathcal S_q}
&\mathbb E\Biggl[
\bigl(\mathbf{u}^\top \mathbf Y_i\bigr)^2 \\
&\times
\mathbf 1\!\Bigl\{
|\mathbf{u}^\top \mathbf Y_i|
>
\eta\sqrt{\mathbf{u}^\top \boldsymbol\Sigma(\mathbf v_q) \mathbf{u}}
\Bigr\}
\;\Big|\; \mathbf v_q
\Biggr]
\longrightarrow 0,
\\
&\text{as } K_q\to\infty .
\end{aligned}
\end{equation}
\end{lemma}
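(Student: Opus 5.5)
The plan is to reduce the directional Lindeberg condition to a uniform-negligibility statement. The summands $\mathbf u^\top\mathbf Y_i=(e_i-p_i)\,\mathbf u^\top\boldsymbol\delta_i$ are bounded, so once the maximal summand is small compared with the total standard deviation $\sqrt{\mathbf u^\top\boldsymbol\Sigma(\mathbf v_q)\mathbf u}$, every indicator in \eqref{eq:lindeberg_directional} is identically zero. The whole Lindeberg sum then vanishes exactly for all sufficiently large $K_q$.

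\emph{Step 1: a uniform bound on the summands.} Since $|e_i-p_i|\le 1$, Cauchy--Schwarz gives $|\mathbf u^\top\mathbf Y_i|\le\|\boldsymbol\delta_i\|_2$ for a unit vector $\mathbf u$. From \eqref{eq:delta_i}, each entry is $\delta_{j,i}=2\zeta_i^2 v_{q,i}(v_{d_k,i}-v_{d_j,i})$. All TF values lie in $[0,1]$, and by \eqref{eq: IDF} we have $0\le\zeta_i\le\log(n+1)$. Hence
\[
\max_{i\in\mathcal S_q}|\mathbf u^\top\mathbf Y_i|\;\le\;B_q:=2\sqrt{m}\,\log^2(n+1)\,\max_{i\in\mathcal S_q}v_{q,i}.
\]
I would keep the factor $\max_i v_{q,i}$ explicit rather than bounding it by $1$, since it is small for long queries.

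\emph{Step 2: growth of the projected variance.} By \eqref{eq:Sigma_vq},
\[
\sigma_u^2:=\mathbf u^\top\boldsymbol\Sigma(\mathbf v_q)\mathbf u=\sum_{i\in\mathcal S_q}p_i(1-p_i)(\mathbf u^\top\boldsymbol\delta_i)^2 .
\]
The aim is to show $B_q/\sigma_u\to0$ as $K_q\to\infty$. The target is that $\sigma_u^2$ grows like $K_q\min_i v_{q,i}^2$ while $B_q^2$ grows like $\max_i v_{q,i}^2$. This requires three ingredients:
\begin{enumerate}
\item[(a)] survival probabilities bounded away from $0$ and $1$. Here $r_i$ is a finite integer by \eqref{eq:repetition} and $0<\epsilon<1$, so $p_i(1-p_i)\ge\epsilon^{r_{\max}}(1-\epsilon)$. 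This is uniformly positive provided the repetition counts stay bounded along the sequence.
\item[(b)] comparable nonzero TF values across the support, i.e., bounded $\max_i v_{q,i}/\min_i v_{q,i}$.
\item[(c)] non-degeneracy in direction $\mathbf u$: a positive fraction of $i\in\mathcal S_q$ have $|\mathbf u^\top(\boldsymbol\delta_i/(2\zeta_i^2v_{q,i}))|\ge c>0$, meaning the document differences $v_{d_k,i}-v_{d_j,i}$ do not all vanish or cancel in direction $\mathbf u$.
\end{enumerate}
These are implicit in Assumption~\ref{assump:asymptotic_framework}, with $\boldsymbol\Sigma$ asymptotically nondegenerate after normalization. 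I would state them explicitly as the regularity conditions under which the lemma holds. Under (a)--(c), $\sigma_u^2\gtrsim K_q\,B_q^2$, so $B_q/\sigma_u=O(K_q^{-1/2})\to0$.

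\emph{Step 3: conclusion.} Fix $\eta>0$ and choose $K_q$ large enough that $B_q<\eta\sigma_u$. Then $\mathbf 1\{|\mathbf u^\top\mathbf Y_i|>\eta\sigma_u\}=0$ almost surely for every $i$, so the left-hand side of \eqref{eq:lindeberg_directional} equals zero and the limit follows. An alternative route, equally short, is the Lyapunov condition $\sum_i\mathbb E|\mathbf u^\top\mathbf Y_i|^3/\sigma_u^3\le B_q/\sigma_u\to0$, which implies Lindeberg.

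The main obstacle is Step~2. Without condition (c) the projected variance could stay bounded even as $K_q\to\infty$, for instance if $\mathbf u$ is nearly orthogonal to all $\boldsymbol\delta_i$ or the competing documents agree with $d_k$ on most query terms. Without (a), redundancy could grow so that $p_i\to1$. So the real work is justifying, or explicitly imposing, the non-degeneracy and bounded-redundancy hypotheses. The remaining steps are routine.
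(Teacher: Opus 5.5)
Your argument is correct, and its core mechanism is the same as the paper's treatment of its ``good set''. You bound $|\mathbf u^\top\mathbf Y_i|\le|\mathbf u^\top\boldsymbol\delta_i|\lesssim v_{q,i}$. You lower-bound $\mathbf u^\top\boldsymbol\Sigma(\mathbf v_q)\mathbf u$ using $p_i(1-p_i)$ bounded away from zero together with a directional non-degeneracy condition. You then conclude that the largest summand is $o(\sigma_u)$, so every indicator vanishes identically for large $K_q$.

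The difference lies in the hypotheses. The paper does not require bounded redundancy on all of $\mathcal S_q$. It fixes $r_0$ and splits $\mathcal S_q$ into $G_{r_0}=\{i:r_i\le r_0\}$ and $B_{r_0}=\mathcal S_q\setminus G_{r_0}$. It imposes the comparable-TF condition (A1), with $|G_{r_0}|\to\infty$, and the non-degeneracy condition (A3) only on $G_{r_0}$. It then disposes of $B_{r_0}$ by the crude bound $T_B\le\sum_{i\in B_{r_0}}p_i(1-p_i)a_i^2/s_u^2$ combined with the $\ell_2$-negligibility condition (A4).

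This matters in the present model because $r_i$ in \eqref{eq:repetition} grows with $v_{q,i}$, so frequently repeated non-stop-word terms get $p_i\to 1$. Your (a) and (b) exclude such terms entirely, whereas the paper tolerates them as long as their $\ell_2$ energy is negligible. Your setting is essentially the special case $B_{r_0}=\emptyset$: take $r_0=r_{\max}$, so (A4) is vacuous. Your scale-free (b) can also replace (A1), since the paper only uses the ratio $\max_{G_{r_0}}v_{q,i}^2/\sum_{G_{r_0}}v_{q,j}^2=O(1/|G_{r_0}|)$. In exchange, your proof is a single uniform argument with no good/bad split. You also correctly observe that the paper's (A2) holds automatically in this model, since $|v_{d_k,i}-v_{d_j,i}|\le 1$ and $0\le\zeta_i\le\log(n+1)$.

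Two small points need fixing.

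First, your (c) divides out $\zeta_i^2$. To obtain $\sigma_u^2\gtrsim K_q\max_i v_{q,i}^2$ you therefore also need $\zeta_i$ bounded below on the non-degenerate fraction. Because $n$ is fixed (so that $\mathbf u\in\mathbb R^{m}$ is fixed), $\zeta_i\in\{0\}\cup[\log(1+1/n),\log(n+1)]$. It therefore suffices to require (c) over indices with $\zeta_i>0$. Alternatively, state (c) directly as $|\mathbf u^\top\boldsymbol\delta_i|\ge c\,v_{q,i}$ on a positive fraction of $\mathcal S_q$; together with (b), this implies the paper's aggregate condition (A3).

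Second, these regularity conditions are not implicit in Assumption~\ref{assump:asymptotic_framework}, which only supplies $K_q\to\infty$. They must be imposed as additional hypotheses, exactly as the paper does with (A1)--(A4).
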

\begin{IEEEproof}
The proof is presented in Appendix \ref{app:linbergcondition}
\end{IEEEproof}
\begin{lemma}[Asymptotic Gaussianity of the margin vector]
\label{lem:Delta_CLT}
Under Assumption~\ref{assump:asymptotic_framework} and Lemma~\ref{lem:lindeberg}, conditioned on \(\mathbf v_q\),
\begin{equation}
\widehat{\boldsymbol{\Delta}}
\xrightarrow[]{d}
\mathcal N\!\bigl(\boldsymbol{\mu}(\mathbf v_q),\boldsymbol{\Sigma}(\mathbf v_q)\bigr),
\quad \text{as } K_q\to\infty.
\end{equation}
\end{lemma}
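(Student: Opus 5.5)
The plan is to prove the statement via the Cramér--Wold device combined with the Lindeberg--Feller central limit theorem for triangular arrays, applied along the sequence of instances in Assumption~\ref{assump:asymptotic_framework}.

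First, decompose as in \eqref{eq:margin_vec} and \eqref{eq:Yi_def}:
\[
\widehat{\boldsymbol\Delta}-\boldsymbol\mu(\mathbf v_q)=\sum_{i\in\mathcal S_q}\mathbf Y_i .
\]
Conditionally on $\mathbf v_q$, the summands $\mathbf Y_i$ are independent and centered, with covariances given by \eqref{eq:Yi_cov}. Their sum therefore has covariance $\boldsymbol\Sigma(\mathbf v_q)$, as in \eqref{eq:Sigma_vq}.

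Because the mean and covariance both change with $t$, the statement $\widehat{\boldsymbol\Delta}\xrightarrow{d}\mathcal N(\boldsymbol\mu,\boldsymbol\Sigma)$ has to be given a precise meaning. I will read it as the standardized convergence
\[
\boldsymbol\Sigma(\mathbf v_q)^{-1/2}\bigl(\widehat{\boldsymbol\Delta}-\boldsymbol\mu(\mathbf v_q)\bigr)\xrightarrow{d}\mathcal N(\mathbf 0,\mathbf I_m).
\]
A weaker alternative reading is convergence of every fixed projection after normalization by its own standard deviation. Either way, one needs $\mathbf u^\top\boldsymbol\Sigma\mathbf u>0$ for each unit $\mathbf u$. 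This is implicit in Lemma~\ref{lem:lindeberg}, since its normalization divides by this quantity.

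\textbf{Projections.} Fix a unit vector $\mathbf u\in\mathbb R^m$ and set $X_i:=\mathbf u^\top\mathbf Y_i=(e_i-p_i)\,\mathbf u^\top\boldsymbol\delta_i$. For each $t$, the $X_i$ are independent, centered, real-valued and bounded. Their variances sum to $s_t^2:=\mathbf u^\top\boldsymbol\Sigma(\mathbf v_q)\mathbf u$. Lemma~\ref{lem:lindeberg} is exactly the Lindeberg condition for this row-wise independent triangular array. The Lindeberg--Feller theorem \cite{van2000asymptotic} then gives
\[
s_t^{-1}\sum_{i\in\mathcal S_q}X_i\xrightarrow{d}\mathcal N(0,1),
\]
that is, $\mathbf u^\top(\widehat{\boldsymbol\Delta}-\boldsymbol\mu)$ is asymptotically $\mathcal N(0,\mathbf u^\top\boldsymbol\Sigma\mathbf u)$.

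\textbf{Lifting to vectors.} Cramér--Wold with a varying covariance is the main obstacle. I plan to apply the above with $\mathbf u$ replaced by $\mathbf u_t=\boldsymbol\Sigma_t^{-1/2}\mathbf a/\|\boldsymbol\Sigma_t^{-1/2}\mathbf a\|$ for a fixed $\mathbf a$. This reduces the standardized vector statement to projections of the form $\mathbf a^\top\boldsymbol\Sigma_t^{-1/2}\sum_i\mathbf Y_i$, which have variance $\|\mathbf a\|^2$. Since $\mathbf u_t$ is not fixed, I need the Lindeberg condition to hold uniformly over unit directions. Two routes to this:
\begin{enumerate}
\item If the normalized covariance $\boldsymbol\Sigma_t/\mathrm{tr}\,\boldsymbol\Sigma_t$ converges to a positive definite limit $\bar{\boldsymbol\Sigma}$ (possibly after passing to subsequences), then $\mathbf u_t$ converges and continuity gives the result directly.
\item Otherwise, I would use the argument of Appendix~\ref{app:linbergcondition}. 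That argument presumably bounds
\[
\max_i |\mathbf u^\top\boldsymbol\delta_i|^2 \big/ \mathbf u^\top\boldsymbol\Sigma\mathbf u
\]
via bounded TF-IDF entries and $p_i(1-p_i)$ bounded away from zero. Such a bound is uniform in $\mathbf u$ once the smallest eigenvalue of $\boldsymbol\Sigma$ grows like $K_q$. The Lindeberg sums then vanish uniformly over the unit sphere.
\end{enumerate}
Given uniformity, the scalar CLT for every fixed $\mathbf a$ plus the Cramér--Wold device gives the standardized multivariate convergence. This is the precise meaning of the displayed Gaussian approximation.
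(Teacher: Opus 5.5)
Your projection step is exactly the paper's proof. Conditioned on $\mathbf v_q$, the paper applies Lindeberg--Feller to $Z_i=\mathbf u^\top\mathbf Y_i$ for each fixed $\mathbf u$ using Lemma~\ref{lem:lindeberg}. It then applies Cram\'er--Wold directly to the statements $\mathbf u^\top(\widehat{\boldsymbol\Delta}-\boldsymbol\mu)\approx\mathcal N(0,\mathbf u^\top\boldsymbol\Sigma\mathbf u)$, with no lifting step. It also handles $\mathbf u^\top\boldsymbol\Sigma\mathbf u=0$ separately: that projection is a.s.\ zero, and Appendix~\ref{app:linbergcondition} sets the ratio to $0$ by convention. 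So positivity is not ``implicit in Lemma~\ref{lem:lindeberg}.'' Your objection to Cram\'er--Wold with $t$-dependent targets is well founded, because fixed-direction Lindeberg does not imply the standardized vector CLT. Take $m=2$, and let $\boldsymbol\Sigma_t$ have eigenvalue $1$ along $\mathbf w_t$ and $t^2$ along $\mathbf w_t^\perp$, where $\mathbf w_t$ approaches a fixed $\mathbf w$ at angle $t^{-1/2}$. Let one bounded two-point summand $\eta\,\mathbf w_t$ carry half the variance along $\mathbf w_t$, and let many tiny summands supply the rest. Every fixed projection then has variance at least of order $t$, so Lindeberg holds for each fixed $\mathbf u$. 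Yet the $\mathbf w_t$-coordinate of $\boldsymbol\Sigma_t^{-1/2}(\widehat{\boldsymbol\Delta}-\boldsymbol\mu)$ tends to $\eta$ plus an independent $\mathcal N(0,\tfrac12)$, which is not Gaussian.

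The gap in your proposal is that the lift is left conditional. Neither route's hypothesis is derived, and route 2's hypothesis cannot hold in this model. Since $|\delta_{j,i}|\le 2C_\zeta^2C_d\,v_{q,i}$ and $\sum_i v_{q,i}^2\le 1$, $\mathrm{tr}\,\boldsymbol\Sigma(\mathbf v_q)$ stays bounded, so its smallest eigenvalue cannot grow like $K_q$. The correct scale-free requirement is $\max_i\|\boldsymbol\delta_i\|_2^2=o(\lambda_{\min}(\boldsymbol\Sigma))$; note also that $p_i(1-p_i)$ is bounded below only on $G_{r_0}$.

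The missing step follows from the conditions under which Lemma~\ref{lem:lindeberg} is proved. Let $A_t:=\sum_{i\in G_{r_0}}\boldsymbol\delta_i\boldsymbol\delta_i^\top/\sum_{i\in G_{r_0}}v_{q,i}^2$; then $\|A_t\|\le K^2$ by (A2), with $K=2C_\zeta^2C_d\sqrt m$. Suppose $\lambda_{\min}(A_t)\to0$ along a subsequence. Passing to a further subsequence with unit eigenvectors $\mathbf w_t\to\mathbf w$ gives $\mathbf w^\top A_t\mathbf w\to0$, which contradicts (A3) in the fixed direction $\mathbf w$. Hence $\lambda_{\min}(A_t)\ge c_*>0$ eventually, and so $\boldsymbol\Sigma\succeq c_{\epsilon,r_0}c_*\bigl(\sum_{i\in G_{r_0}}v_{q,i}^2\bigr)\mathbf I_m$. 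Since also $\mathrm{tr}\,\boldsymbol\Sigma\le\frac{K^2}{4}\sum_{i\in\mathcal S_q}v_{q,i}^2$, (A4) yields $\liminf\lambda_{\min}(\boldsymbol\Sigma)/\mathrm{tr}\,\boldsymbol\Sigma>0$. Thus $\boldsymbol\Sigma$ is eventually invertible, and every subsequential limit $\bar{\boldsymbol\Sigma}$ of $\boldsymbol\Sigma/\mathrm{tr}\,\boldsymbol\Sigma$ is positive definite.

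Your route 1 then closes using only fixed-direction Lindeberg. Along such a subsequence, Lindeberg--Feller in each fixed direction $\mathbf a$ plus Slutsky gives $(\widehat{\boldsymbol\Delta}-\boldsymbol\mu)/\sqrt{\mathrm{tr}\,\boldsymbol\Sigma}\xrightarrow{d}\mathcal N(\mathbf 0,\bar{\boldsymbol\Sigma})$. The target is now fixed, so Cram\'er--Wold applies legitimately. Multiplying by $(\boldsymbol\Sigma/\mathrm{tr}\,\boldsymbol\Sigma)^{-1/2}\to\bar{\boldsymbol\Sigma}^{-1/2}$ yields $\mathcal N(\mathbf 0,\mathbf I_m)$, which does not depend on the subsequence. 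The same uniform lower bound also makes the estimates in Appendix~\ref{app:linbergcondition} uniform in $\mathbf u$, so route 2 works as well once stated in scale-free form.
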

\begin{IEEEproof}
Condition on \(\mathbf v_q\). Then the vectors \(\{\mathbf U_i\}_{i\in\mathcal S_q}\) are independent, and
\[
\widehat{\boldsymbol{\Delta}}-\boldsymbol{\mu}(\mathbf v_q)
=
\sum_{i\in\mathcal S_q}\mathbf Y_i,
\]
where \(\{\mathbf Y_i\}_{i\in\mathcal S_q}\) are independent, centered random vectors.

Fix \(\mathbf{u}\in\mathbb R^m\). Consider the scalar variables
\[
Z_i:=\mathbf{u}^\top \mathbf Y_i,
\qquad i\in\mathcal S_q.
\]
Conditioned on \(\mathbf v_q\), the variables \(\{Z_i\}_{i\in\mathcal S_q}\) are independent and centered, and their total variance is
\[
\sum_{i\in\mathcal S_q}\mathrm{Var}(Z_i\mid \mathbf v_q)
=
\sum_{i\in\mathcal S_q}\mathbf{u}^\top \mathrm{Cov}(\mathbf Y_i\mid \mathbf v_q)\mathbf{u}
=
\mathbf{u}^\top \boldsymbol{\Sigma}(\mathbf v_q)\mathbf{u}.
\]
Lemma~\ref{lem:lindeberg} is precisely the Lindeberg condition for the triangular array \(\{Z_i\}_{i\in\mathcal S_q}\). 
Moreover, by \eqref{eq:Sigma_vq},
$\mathbf{u}^\top \boldsymbol{\Sigma}(\mathbf v_q)\mathbf{u}\ge0$.
Therefore, if \(\mathbf{u}^\top \boldsymbol{\Sigma}(\mathbf v_q)\mathbf{u}>0\), the Lindeberg--Feller central limit theorem for triangular arrays implies that
\[
\frac{\sum_{i\in\mathcal S_q} Z_i}
{\sqrt{\mathbf{u}^\top \boldsymbol{\Sigma}(\mathbf v_q)\mathbf{u}}}
\xrightarrow[]{d}
\mathcal N(0,1),
\qquad \text{as } K_q\to\infty.
\]
Equivalently,
\[
\mathbf{u}^\top\bigl(\widehat{\boldsymbol{\Delta}}-\boldsymbol{\mu}(\mathbf v_q)\bigr)
\xrightarrow[]{d}
\mathcal N\!\bigl(0,\,\mathbf{u}^\top\boldsymbol{\Sigma}(\mathbf v_q)\mathbf{u}\bigr).
\]

If \(\mathbf{u}^\top \boldsymbol{\Sigma}(\mathbf v_q)\mathbf{u}=0\), then \(\sum_{i\in\mathcal S_q}\mathrm{Var}(Z_i\mid \mathbf v_q)=0\), so \(Z_i=0\) almost surely for every \(i\in\mathcal S_q\). Hence
\[
\mathbf{u}^\top\bigl(\widehat{\boldsymbol{\Delta}}-\boldsymbol{\mu}(\mathbf v_q)\bigr)=0
\quad \text{almost surely}.
\]

Thus, for every fixed \(\mathbf{u}\in\mathbb R^m\),
\[
\mathbf{u}^\top\bigl(\widehat{\boldsymbol{\Delta}}-\boldsymbol{\mu}(\mathbf v_q)\bigr)
\xrightarrow[]{d}
\mathcal N\!\bigl(0,\,\mathbf{u}^\top\boldsymbol{\Sigma}(\mathbf v_q)\mathbf{u}\bigr).
\]
Hence, by the Cramér--Wold theorem \cite{billingsley2013convergence},
\[
\widehat{\boldsymbol{\Delta}}
\xrightarrow[]{d}
\mathcal N\!\bigl(\boldsymbol{\mu}(\mathbf v_q),\boldsymbol{\Sigma}(\mathbf v_q)\bigr).
\]
\end{IEEEproof}
\begin{theorem}[Asymptotically exact Gaussian approximation of the conditional error probability]
\label{thm:Pe_MVN_smooth}
Let
\[
P_e(\mathbf v_q)=\Pr(\widehat{k}\neq k \mid \mathbf v_q)
\]
denote the conditional probability of error. Under the asymptotic normality of
\(\widehat{\boldsymbol{\Delta}}\) established in Lemma~\ref{lem:Delta_CLT}, suppose that the limiting Gaussian law
\(\mathcal N\!\bigl(\boldsymbol\mu(\mathbf v_q),\boldsymbol\Sigma(\mathbf v_q)\bigr)\)
assigns zero probability to the boundary of the positive orthant. Then
\begin{equation}
\label{eq:Pe_MVN_final_smooth}
P_e(\mathbf v_q)
-
\left[
1-\Phi_m\!\bigl(\mathbf 0;\,-\boldsymbol\mu(\mathbf v_q),\,\boldsymbol\Sigma(\mathbf v_q)\bigr)
\right]
\to 0,
\quad \text{as } K_q\to\infty.
\end{equation}
\end{theorem}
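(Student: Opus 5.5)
The plan is to express the error event through the margin vector, note that correct retrieval corresponds to the open positive orthant, and then pass to the limit with the Portmanteau theorem.

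\emph{Step 1 (error event).} By \eqref{eq:margin_j}, document $k$ is selected uniquely whenever $\hat\Delta_j>0$ for every $j\neq k$. An error ($\hat k\neq k$) requires some $\hat\Delta_j<0$, or a tie $\hat\Delta_j=0$ that the tie-breaking rule resolves against $k$. With $\mathcal O:=\{\mathbf x\in\mathbb R^m: x_j>0\ \forall j\}$, this gives the sandwich
\[
\Pr(\widehat{\boldsymbol\Delta}\notin\overline{\mathcal O}\mid\mathbf v_q)\le P_e(\mathbf v_q)\le \Pr(\widehat{\boldsymbol\Delta}\notin\mathcal O\mid\mathbf v_q).
\]
The two bounds differ by at most $\Pr(\widehat{\boldsymbol\Delta}\in\partial\mathcal O\mid\mathbf v_q)$.

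\emph{Step 2 (limit via Portmanteau).} Lemma~\ref{lem:Delta_CLT} gives $\widehat{\boldsymbol\Delta}\xrightarrow{d}\mathbf G\sim\mathcal N(\boldsymbol\mu(\mathbf v_q),\boldsymbol\Sigma(\mathbf v_q))$. By hypothesis $\Pr(\mathbf G\in\partial\mathcal O)=0$, so $\mathcal O$ and $\overline{\mathcal O}$ are continuity sets of the limit law, and the Portmanteau theorem yields
\[
\Pr(\widehat{\boldsymbol\Delta}\in\mathcal O)-\Pr(\mathbf G\in\mathcal O)\to0 \quad\text{and}\quad \Pr(\widehat{\boldsymbol\Delta}\in\overline{\mathcal O})-\Pr(\mathbf G\in\mathcal O)\to 0.
\]
Both bounds in the sandwich therefore converge to $1-\Pr(\mathbf G\in\mathcal O)$, and so does $P_e(\mathbf v_q)$.

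\emph{Step 3 (identification).} Write $\mathbf G=\boldsymbol\mu+\mathbf W$ with $\mathbf W\sim\mathcal N(\mathbf 0,\boldsymbol\Sigma)$. By symmetry of $\mathbf W$,
\[
\Pr(\mathbf G\in\mathcal O)=\Pr(-\mathbf W<\boldsymbol\mu)=\Pr(\boldsymbol\mu'+\mathbf W\le\mathbf 0),
\]
where $\boldsymbol\mu'=-\boldsymbol\mu$ and strict versus non-strict inequality is immaterial because the boundary has zero mass. This is $\Phi_m(\mathbf 0;-\boldsymbol\mu(\mathbf v_q),\boldsymbol\Sigma(\mathbf v_q))$, the $m$-variate Gaussian CDF at $\mathbf 0$ with mean $-\boldsymbol\mu$ and covariance $\boldsymbol\Sigma$, which is \eqref{eq:Pe_MVN_final_smooth}.

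\emph{Main obstacle.} The limit law changes with $t$: $\boldsymbol\mu$ and $\boldsymbol\Sigma$ depend on the instance, and their entries may grow with $K_q$. Lemma~\ref{lem:Delta_CLT} should therefore be read as a statement about the standardized vector. I would rewrite the events in Step 1 in standardized coordinates, using the diagonal scaling $\mathbf D^{-1/2}$ with $\mathbf D=\mathrm{diag}(\boldsymbol\Sigma)$. The orthant is invariant under positive diagonal scaling, so it stays an orthant, now shifted by the standardized mean. The zero-boundary-mass hypothesis should then be used uniformly along the sequence. The natural route is to argue, along subsequences, that the standardized covariance (a correlation matrix) lies in a compact set. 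One then gets convergence of the standardized vector and applies Step 2 on each subsequence. If the shifted means diverge, both probabilities tend to $0$ or to $1$ together, so the difference still vanishes. Since every subsequence has a further subsequence along which the difference tends to zero, the difference converges to zero along the full sequence.
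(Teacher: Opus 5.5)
Your Steps 1--3 reproduce the paper's proof. The paper writes correct retrieval as $\widehat{\boldsymbol\Delta}\in\mathbb R_+^m$, declares the open orthant a continuity set of $Z\sim\mathcal N(\boldsymbol\mu(\mathbf v_q),\boldsymbol\Sigma(\mathbf v_q))$, applies the portmanteau theorem, and identifies the limit with $\Phi_m(\mathbf 0;-\boldsymbol\mu(\mathbf v_q),\boldsymbol\Sigma(\mathbf v_q))$.

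Where you depart from it, you are more careful than the paper. The paper asserts $\widehat k=k\iff\hat\Delta_j>0$ for all $j\neq k$, which ignores the randomized tie-breaking; ties can carry positive probability at finite $K_q$ because the $e_i$ are discrete. Your sandwich between $\mathcal O$ and $\overline{\mathcal O}$ handles this at no cost.

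More importantly, the paper treats $Z$ as a fixed limit law, although $\boldsymbol\mu(\mathbf v_q)$ and $\boldsymbol\Sigma(\mathbf v_q)$ change along the sequence. Under (A1)--(A2) one even has $\Sigma_{jj}(\mathbf v_q)=O(1/L_q)\to 0$. So the paper's portmanteau step is only literally valid when these parameters converge. Your diagonal standardization with a subsequence/compactness argument is the right way to prove the ``difference tends to zero'' statement the theorem actually asserts.

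Three places to tighten it:
\begin{itemize}
\item The standardized CLT along a subsequence with correlation matrices $\mathbf R_t\to\mathbf R$ is not literally Lemma~\ref{lem:Delta_CLT}. It does follow from Lemma~\ref{lem:lindeberg} with $\mathbf u=\mathbf e_j$, $j=1,\dots,m$, together with the multivariate Lindeberg--Feller theorem. For fixed $m$ one has $\|\mathbf x\|_2^2\,\mathbf 1\{\|\mathbf x\|_2>\eta\}\le m\sum_j x_j^2\,\mathbf 1\{|x_j|>\eta/\sqrt m\}$, so the coordinatewise conditions give the vector Lindeberg condition.
\item No uniform version of the boundary hypothesis is needed. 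After standardization each limiting coordinate is $\mathcal N(h_j,1)$, so the limit law never charges $\partial\mathcal O\subseteq\bigcup_j\{x_j=0\}$. The theorem's hypothesis only matters for coordinates with $\Sigma_{jj}(\mathbf v_q)=0$, which are deterministic and can be handled directly.
\item In the divergent case the two probabilities need not go to $0$ or $1$ together. If some standardized means $h_{t,j}=\mu_j/\sqrt{\Sigma_{jj}}$ tend to $+\infty$ while others stay bounded, discard the former: by Chebyshev their violation probability is at most $h_{t,j}^{-2}$ for both the true and the Gaussian vectors. Then apply the bounded-mean argument to the remaining coordinates. If any $h_{t,j}\to-\infty$, both orthant probabilities vanish.
\end{itemize}
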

\begin{IEEEproof}
By definition of the score margins,
\[
\widehat{k}=k
\quad\Longleftrightarrow\quad
\hat{\Delta}_j>0
\quad\text{for all } j\neq k,
\]
which is equivalent to
\[
\widehat{\boldsymbol\Delta}\in\mathbb R_+^m,
\qquad
\mathbb R_+^m:=\{x\in\mathbb R^m:\ x_j>0,\ \forall j\}.
\]
Hence
\[
P_e(\mathbf v_q)
=
1-\Pr(\widehat{\boldsymbol\Delta}\in\mathbb R_+^m).
\]

Let $Z\sim\mathcal N\!\bigl(\boldsymbol\mu(\mathbf v_q),\boldsymbol\Sigma(\mathbf v_q)\bigr)$.
By Lemma~\ref{lem:Delta_CLT},
$\widehat{\boldsymbol\Delta}\xrightarrow[]{d} Z$.
Moreover, by assumption,
$\Pr(Z\in \partial \mathbb R_+^m)=0$,
so \(\mathbb R_+^m\) is a continuity set for \(Z\) \cite{billingsley2013convergence}. Therefore,
\[
\Pr(\widehat{\boldsymbol\Delta}\in\mathbb R_+^m)
\to
\Pr(Z\in\mathbb R_+^m).
\]
Since $\Pr(Z\in\mathbb R_+^m)
=
\Phi_m\!\bigl(\mathbf 0;\,-\boldsymbol\mu(\mathbf v_q),\,\boldsymbol\Sigma(\mathbf v_q)\bigr)$,
it follows that
\[
P_e(\mathbf v_q)
=
1-\Pr(\widehat{\boldsymbol\Delta}\in\mathbb R_+^m)
\to
1-\Phi_m\!\bigl(\mathbf 0;\,-\boldsymbol\mu(\mathbf v_q),\,\boldsymbol\Sigma(\mathbf v_q)\bigr),
\]
which is exactly \eqref{eq:Pe_MVN_final_smooth}.
\end{IEEEproof}
\subsection{Bonferroni-Type Bounds}
To obtain computable bounds on the conditional probability of error, stated in Theorem~\ref{thm:Pe_MVN_smooth}, define the
competitor error events
\begin{equation*}
E_j := \{\hat{\Delta}_j \le 0\}, \qquad j=1,\dots,m.
\end{equation*}
Since retrieval is incorrect if and only if at least one competitor has
nonpositive margin, we have
\begin{equation*}
P_e(\mathbf v_q)
=
\Pr\!\left(\bigcup_{j=1}^m E_j \,\middle|\, \mathbf v_q\right).
\end{equation*}
For any integer \(b\ge 1\), define the \(b\)-fold intersection sums
\begin{equation}
B_b
=
\sum_{1\le j_1<\cdots<j_b\le m}
\Pr\!\bigl(E_{j_1}\cap\cdots\cap E_{j_b}\mid \mathbf v_q\bigr).
\label{eq:bonferroni_sums}
\end{equation}
By the inclusion--exclusion principle,
\begin{equation}
P_e(\mathbf v_q)
=
\sum_{b=1}^{m}(-1)^{b+1} B_b.
\label{eq:inclusion_exclusion}
\end{equation}
The Bonferroni inequalities \cite{feller1991introduction} are obtained by
truncating the inclusion--exclusion expansion in
\eqref{eq:inclusion_exclusion}. In particular, for every positive integer
\(t\) such that \(2t-1\le m\),
\begin{equation}
\sum_{b=1}^{2t}(-1)^{b+1} B_b
\;\le\;
P_e(\mathbf v_q)
\;\le\;
\sum_{b=1}^{2t-1}(-1)^{b+1} B_b.
\label{eq:bonferroni_general}
\end{equation}
Hence, truncation after an odd number of terms yields a rigorous upper bound,
while truncation after an even number of terms yields a lower bound. In
particular, choosing \(t=1\) and \(t=2\) gives
\begin{equation*}
P_e(\mathbf v_q)\le B_1,
\qquad
P_e(\mathbf v_q)\le B_1-B_2+B_3.
\end{equation*}
\paragraph*{\textbf{First-order Bonferroni approximation}}
Under the Gaussian approximation in Lemma~\ref{lem:Delta_CLT}, each marginal event probability is approximated by the corresponding one-dimensional Gaussian tail.
Therefore, the resulting first-order Bonferroni approximation is the union bound given as
\begin{equation}
\label{eq:bonferroni_first}
B^{(1)}(\mathbf v_q)
:=
\sum_{j=1}^{m}
\Phi\!\left(
-\frac{\mu_j(\mathbf v_q)}
{\sqrt{\Sigma_{jj}(\mathbf v_q)}}
\right).
\end{equation}
\paragraph*{\textbf{Third-order Bonferroni approximation}}
Sharper approximations are obtained by incorporating pairwise and triple intersections.
Under the Gaussian approximation of Lemma~\ref{lem:Delta_CLT},
\begin{equation*}
\Pr(E_j\cap E_\ell \mid \mathbf v_q)
\approx
\Phi_2\!\bigl(
\mathbf 0;
\boldsymbol\mu_{j\ell}(\mathbf v_q),
\boldsymbol\Sigma_{j\ell}(\mathbf v_q)
\bigr),
\end{equation*}
and
\begin{equation*}
\Pr(E_j\cap E_\ell\cap E_u \mid \mathbf v_q)
\approx
\Phi_3\!\bigl(
\mathbf 0;
\boldsymbol\mu_{j\ell u}(\mathbf v_q),
\boldsymbol\Sigma_{j\ell u}(\mathbf v_q)
\bigr),
\end{equation*}
where \(\boldsymbol\mu_{j\ell}(\mathbf v_q)\) and \(\boldsymbol\Sigma_{j\ell}(\mathbf v_q)\) denote the subvector and submatrix of
\(\boldsymbol\mu(\mathbf v_q)\) and \(\boldsymbol\Sigma(\mathbf v_q)\) corresponding to indices \((j,\ell)\), and similarly
\(\boldsymbol\mu_{j\ell u}(\mathbf v_q)\) and \(\boldsymbol\Sigma_{j\ell u}(\mathbf v_q)\) correspond to indices \((j,\ell,u)\).

Accordingly, the Gaussian approximations of the second- and third-order intersection sums are
\begin{align}
\widetilde{B}_2(\mathbf v_q)
&=
\sum_{1\le j<\ell\le m}
\Phi_2\!\bigl(
\mathbf 0;
\boldsymbol\mu_{j\ell}(\mathbf v_q),
\boldsymbol\Sigma_{j\ell}(\mathbf v_q)
\bigr),
\label{eq:S2_G}\\[4pt]
\widetilde{B}_3(\mathbf v_q)
&=
\sum_{1\le j<\ell<u\le m}
\Phi_3\!\bigl(
\mathbf 0;
\boldsymbol\mu_{j\ell u}(\mathbf v_q),
\boldsymbol\Sigma_{j\ell u}(\mathbf v_q)
\bigr).
\label{eq:S3_G}
\end{align}
Combining these terms with \(B^{(1)}(\mathbf v_q)\) from \eqref{eq:bonferroni_first} yields the Gaussian third-order Bonferroni approximation
\begin{equation}
\label{eq:bonferroni_third}
B^{(3)}(\mathbf v_q)
:=
B^{(1)}(\mathbf v_q)-\widetilde{B}_2(\mathbf v_q)+\widetilde{B}_3(\mathbf v_q).
\end{equation}
Accordingly, $\min\{B^{(1)}(\mathbf v_q),\widetilde{B}^{(3)}(\mathbf v_q)\}$ may be used as a tighter computable approximation.
\vspace{-7.5pt}
\subsection{\v{S}id\'ak Bound}
Bonferroni-type bounds provide rigorous and systematically improvable upper bounds on the union probability associated with competitor error events, and hence on the conditional probability of error. However, their computational complexity grows rapidly with the order of the expansion, since higher-order approximations require evaluating an increasing number of low-dimensional Gaussian orthant probabilities. When the number of competitors \(m\) is large, this combinatorial growth limits their practical usefulness. As an alternative, we invoke the \v{S}id\'ak inequality \cite{vsidak1967rectangular}, which yields a tractable lower bound on multivariate Gaussian orthant probabilities in terms of one-dimensional marginals. When the Gaussian margin vector has nonnegative pairwise correlations, the \v{S}id\'ak inequality applies directly. In the present model, this condition is shown below to hold asymptotically as the support size increases, making the resulting \v{S}id\'ak approximation increasingly tight and computationally efficient.
For an \(m\)-dimensional Gaussian random vector \(\mathbf Z=(Z_1,\dots,Z_m)^\top\sim\mathcal N(\boldsymbol\mu,\boldsymbol\Sigma)\) whose pairwise correlation coefficients satisfy
\[
\rho_{j\ell}
=
\frac{\Sigma_{j\ell}}{\sqrt{\Sigma_{jj}\Sigma_{\ell\ell}}}
\ge 0,
\qquad j\neq \ell,
\]
the \v{S}id\'ak inequality states that \cite{vsidak1967rectangular}
\begin{equation}
\label{eq:sidak_lower}
\Pr(\mathbf Z>\mathbf 0)
\ge
\prod_{j=1}^m \Pr(Z_j>0),
\end{equation}
with equality if and only if the coordinates are independent.
Since $\Pr(Z_j>0)=\Phi\!\bigl(\mu_j/\sqrt{\Sigma_{jj}}\bigr)$ for a univariate Gaussian
random variable, the inequality in~\eqref{eq:sidak_lower} yields the explicit
product lower bound
\begin{equation}
\label{eq:sidak_explicit}
\Pr(\mathbf Z > \mathbf 0)
\;\ge\;
\prod_{j=1}^{m}
\Phi\!\left(
\frac{\mu_j}{\sqrt{\Sigma_{jj}}}
\right).
\end{equation}
Applying the \v{S}id\'ak inequality to the Gaussian approximation of the margin vector in Theorem~\ref{thm:Pe_MVN_smooth} yields the following approximation. If all pairwise correlations in \(\boldsymbol\Sigma(\mathbf v_q)\) are nonnegative, then \eqref{eq:sidak_explicit} gives
\begin{equation}
\label{eq:sidak_orthant}
\Pr(\widehat{\boldsymbol\Delta} > \mathbf 0 \mid \mathbf v_q)
\;\gtrsim\;
\prod_{j=1}^{m}
\Phi\!\left(
\frac{\mu_j(\mathbf v_q)}
{\sqrt{\Sigma_{jj}(\mathbf v_q)}}
\right).
\end{equation}
The nonnegativity condition is essential: for Gaussian vectors, \eqref{eq:sidak_explicit} does not hold in general when some pairwise correlations are negative. Thus, whenever the correlation condition is satisfied, the right-hand side of \eqref{eq:sidak_orthant} gives a rigorous lower bound on the corresponding Gaussian orthant probability. Equivalently, this leads to the following approximation for the conditional probability of error:
\begin{equation}
\label{eq:sidak_error_bound}
P_e(\mathbf v_q)
\;\lesssim\;
1-
\prod_{j=1}^{m}
\Phi\!\left(
\frac{\mu_j(\mathbf v_q)}
{\sqrt{\Sigma_{jj}(\mathbf v_q)}}
\right).
\end{equation}
Here, \(\gtrsim\) and \(\lesssim\) indicate approximations induced by the Gaussian approximation, rather than rigorous inequalities for the exact conditional error probability.

In the present setting, empirical evaluation of \(\boldsymbol\Sigma(\mathbf v_q)\) indicates that the vast majority of pairwise correlations are nonnegative, with any negative correlations observed for small query lengths typically being weak. In such cases, the product expression in \eqref{eq:sidak_orthant} may still serve as a heuristic approximation in practice, although it is no longer guaranteed to be a rigorous bound when the nonnegativity condition is violated. This expression is fully analytic, inexpensive to evaluate, and scales linearly in the number of competitors. As shown next, the nonnegativity condition holds asymptotically in the present model as the number of active query coordinates grows.

To justify the nonnegativity condition in a tractable asymptotic setting, we
introduce in the next lemma an auxiliary random-ensemble model for the document
coordinates. This i.i.d.\ assumption is used only to analyze the limiting sign
of the Gaussian margin correlations in \(\boldsymbol{\Sigma}(\mathbf v_q)\); it
is not part of the main fixed-corpus retrieval model, but rather provides a
probabilistic justification for the \v{S}id\'ak approximation in an
asymptotic regime.
\begin{lemma}[Asymptotic positivity of competitor correlations under an i.i.d.\ model]
\label{lem:cov_concentration}
Fix two distinct competitors \(j\neq \ell\). Assume that, for each \(i\in\mathcal S_q\), the random variables
\(\{v_{d_r,i}:r\in\{k,j,\ell\}\}\) are i.i.d.\ copies of a real random variable \(F\) with
\(\mathbb E[F^2]<\infty\) and \(\mathrm{Var}(F)>0\), and that these triples are independent across \(i\in\mathcal S_q\).
Assume also that the weights
\[
a_i := 4\,p_i(1-p_i)\,\zeta_i^4\,v_{q,i}^2, \qquad i\in\mathcal S_q,
\]
satisfy
\[
c_1 K_q \le \sum_{i\in\mathcal S_q} a_i \le c_2 K_q
\]
for some constants \(0<c_1\le c_2<\infty\) and all sufficiently large \(K_q\), and that
\(\sup_{i\in\mathcal S_q} a_i<\infty\).
Then
\[
\rho_{j\ell}(\mathbf v_q)
=
\frac{\Sigma_{j\ell}(\mathbf v_q)}
{\sqrt{\Sigma_{jj}(\mathbf v_q)\Sigma_{\ell\ell}(\mathbf v_q)}}
\xrightarrow[K_q\to\infty]{\mathrm{a.s.}}
\frac{1}{2}.
\]
In particular, \(\rho_{j\ell}(\mathbf v_q)>0\) for all sufficiently large \(K_q\), almost surely.
\end{lemma}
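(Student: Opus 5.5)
Writing $\delta_{j,i}=2\zeta_i^2 v_{q,i}(v_{d_k,i}-v_{d_j,i})$, the covariance in \eqref{eq:Sigma_vq} has entries
\[
\Sigma_{j\ell}(\mathbf v_q)=\sum_{i\in\mathcal S_q} a_i\,(v_{d_k,i}-v_{d_j,i})(v_{d_k,i}-v_{d_\ell,i}),
\qquad
\Sigma_{jj}(\mathbf v_q)=\sum_{i\in\mathcal S_q} a_i\,(v_{d_k,i}-v_{d_j,i})^2 .
\]
The plan is to normalize each of these three sums by $A_K:=\sum_{i\in\mathcal S_q}a_i$, which is of order $K_q$ by hypothesis. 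We then show that each normalized sum converges almost surely to its expectation. Finally we take the ratio.

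First, compute the per-coordinate expectations under the i.i.d.\ model. Write $\sigma^2=\mathrm{Var}(F)>0$. Independence of $v_{d_k,i},v_{d_j,i},v_{d_\ell,i}$ gives
\[
\mathbb E[(v_{d_k,i}-v_{d_j,i})(v_{d_k,i}-v_{d_\ell,i})]=\mathrm{Var}(v_{d_k,i})=\sigma^2 .
\]
Indeed, the cross terms cancel after centering. Similarly,
\[
\mathbb E[(v_{d_k,i}-v_{d_j,i})^2]=2\sigma^2 .
\]
Hence, once the strong laws are established,
\[
\frac{\Sigma_{j\ell}}{A_K}\to\sigma^2,\qquad \frac{\Sigma_{jj}}{A_K}\to 2\sigma^2,\qquad \frac{\Sigma_{\ell\ell}}{A_K}\to 2\sigma^2
\]
almost surely. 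The ratio is then
\[
\frac{\sigma^2}{\sqrt{2\sigma^2\cdot 2\sigma^2}}=\frac12 .
\]
Strict positivity of the limit denominator, via $\sigma^2>0$, justifies the continuous-mapping step. Positivity of $\rho_{j\ell}$ for large $K_q$ follows at once.

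Second, and this is the main step, we need a weighted strong law. Let $X_i$ denote the per-coordinate product, or the squared difference. The $X_i$ are independent across $i$ with common law and finite mean, since $\mathbb E[F^2]<\infty$. The weights $a_i$ are deterministic, conditionally on $\mathbf v_q$, bounded, and satisfy $\sum_i a_i\asymp K_q$. We must show
\[
\frac{1}{A_K}\sum_i a_i\,(X_i-\mathbb E X_i)\to 0 \quad\text{a.s.}
\]
If second moments were available, meaning $\mathbb E[F^4]<\infty$, this would follow from Kolmogorov's criterion, because
\[
\sum_K \frac{a_{(K)}^2}{A_K^2}\lesssim \sum_K \frac{1}{K^2}<\infty
\]
together with Kronecker's lemma, after ordering the support coordinates along the sequence. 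With only first moments of $X_i$, the plan is a truncation argument. Set
\[
X_i'=X_i\,\mathbf 1\{|X_i|\le i\}.
\]
Borel--Cantelli, using $\sum_i\Pr(|X|>i)\le\mathbb E|X|<\infty$, shows that $X_i=X_i'$ eventually. The variance series
\[
\sum_i \frac{a_i^2\,\mathbb E[X_i'^2]}{(c_1 i)^2}\le C\sum_i \frac{\mathbb E[X^2\mathbf 1\{|X|\le i\}]}{i^2}<\infty
\]
is then finite by the standard Etemadi/Kolmogorov computation, which uses boundedness of $\sup_i a_i$. Kronecker's lemma with normalizers $A_K\uparrow\infty$ completes the argument. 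The truncation bias vanishes because $\mathbb E[X\mathbf 1\{|X|>i\}]\to 0$ and the weights are Cesàro-normalized.

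The obstacle I expect is the bookkeeping in the asymptotic framework of Assumption~\ref{assump:asymptotic_framework}. There, supports and weights may change with $t$, so we have a triangular array rather than a single sequence. I would handle this by assuming, or arguing, that the support coordinates can be enumerated along a nested sequence with fixed weights. Failing that, I would settle for convergence in probability, or a.s.\ convergence via fourth-moment bounds, at the cost of a slightly stronger moment assumption.
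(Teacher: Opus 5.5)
Your proposal follows the paper's proof essentially line by line. Both write $\Sigma_{j\ell}=\sum_i a_iW_i$ and $\Sigma_{jj}=\sum_i a_iU_i$, compute the means $\mathrm{Var}(F)$ and $2\,\mathrm{Var}(F)$, normalize by $A_{K_q}=\sum_i a_i$, apply a weighted strong law, and finish with the continuous-mapping step for the ratio.

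The paper only cites ``a weighted strong law of large numbers'' without proof. Your truncation, Borel--Cantelli, Kolmogorov and Kronecker argument actually proves that step under the stated first-moment condition, for a single nested sequence with fixed weights. The triangular-array caveat you raise is genuine, and the paper's one-line citation does not resolve it either. Once supports or weights vary with $t$, first moments no longer suffice in general: for averages over disjoint blocks, almost sure convergence is equivalent to complete convergence, which needs a finite second moment.
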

\begin{IEEEproof}
The proof is presented in Appendix \ref{app:covariance_positivity}.
\end{IEEEproof}
\vspace{-10pt}
\subsection{Complexity}
\label{subsec:complexity}
In this subsection, we compare the computational complexity of the multivariate normal (MVN)
approximation and the computable approximations developed above. All methods
share the same preliminary moment computation based on
\(\boldsymbol{\mu}(\mathbf v_q)\) and \(\boldsymbol{\Sigma}(\mathbf v_q)\) in \eqref{eq:mean} and \eqref{eq:Sigma_vq}. Therefore,
the main complexity differences mainly arise in the evaluation of the resulting
approximation or bound.
The exact MVN approximation in Theorem~1 requires
evaluation of an \(m\)-dimensional Gaussian orthant probability, where
\(m=n-1\) is the number of competitors. In practice, this typically involves an
\(\mathcal O(m^3)\) covariance factorization, together with additional
numerical cost for MVN integration or Monte Carlo evaluation. When \(m\) is
large, this cost becomes prohibitive, which motivates the use of simpler
analytic approximations.
Among the computable alternatives, the first-order Bonferroni approximation
\(B^{(1)}(\mathbf v_q)\) and the \v{S}id\'ak approximation both require only
\(m\) univariate Gaussian CDF evaluations once the relevant moments are
available, and therefore scale linearly in \(m\). By contrast, the third-order
Bonferroni approximation \(B^{(3)}(\mathbf v_q)\) requires summation over
\(\binom{m}{3}\) triplets and therefore has complexity \(\mathcal O(m^3)\).
The \v{S}id\'ak expression is the least expensive overall, since computing
\(\boldsymbol{\mu}(\mathbf v_q)\) and
\(\operatorname{diag}(\boldsymbol{\Sigma}(\mathbf v_q))\) from the
coordinatewise decomposition over the active support \(S_q\) costs
\(\mathcal O(K_q m)\), yielding total complexity \(\mathcal O(K_q m)\) while
avoiding covariance factorization and higher-dimensional Gaussian integration.

\vspace{-15pt}
\section{Numerical Results}
\label{sec: numerical_results}
In this section, we evaluate the retrieval error probability of the proposed system and assess the accuracy of the theoretical approximations and bounds.
Using synthetic data generated to match the assumptions of the theoretical model, we compare Monte Carlo simulations with the Gaussian approximation, the Bonferroni bounds, and the \v{S}id'ak bound.
We also report simulation results on real-world query--document corpora to illustrate system behavior beyond the idealized setting. The results show how query length, erasure probability, and coding rate affect retrieval performance. They demonstrate the accuracy and tightness of the derived approximations and bounds in finite-dimensional regimes under the theoretical model.
\vspace{-20pt}
\subsection{Datasets}
This section describes the datasets used in the numerical evaluation. We consider both a synthetic dataset designed to exactly match the statistical assumptions of the theoretical model and a real-world corpus, preprocessed and augmented to enable controlled, simulation-based experimentation.
\subsubsection{\textbf{Synthetic dataset}}
We generate a synthetic dataset to evaluate the retrieval error probability under the stochastic erasure model. The purpose is to construct a document corpus and queries whose statistics are consistent with the assumptions of the theoretical model.
We consider a document collection of size $n$, where each document is represented by a sparse bag-of-words vector over a vocabulary $\mathcal{V}$ of size $N$. The document terms are sampled i.i.d.\ from a Zipf distribution with exponent $\alpha = 1$, capturing the heavy-tailed term statistics commonly observed in natural language. Each document has a fixed length $L_{\mathrm{doc}}=20000$, and its term-frequency vector is formed accordingly.
In each Monte Carlo trial, a query of length $L_q$ is generated independently by sampling its terms i.i.d.\ from the same Zipf distribution.
\subsubsection{\textbf{Real dataset}}
To construct a realistic evaluation corpus, we use the Natural Questions (NQ) dataset released by Google \cite{kwiatkowski2019natural}, which contains real user queries paired with long-answer passages from Wikipedia. We retain only the question--answer pairs and discard the full articles. Each answer passage is cleaned by HTML stripping and additional preprocessing, including the removal of tables, special symbols, and non-text artifacts, resulting in a text-only corpus of $n$ queries and their corresponding documents.

To expand the query set while preserving semantic consistency, we apply an LLM-based augmentation procedure. For each original question--document pair, a Llama3 (8B) model \cite{dubey2024llama} generates 29 additional questions associated with the same answer passage. The prompt is designed to preserve semantic consistency while avoiding near-duplicate outputs, producing a naturally long-tailed distribution of query variants. This yields a corpus of $30$ queries associated with the same $n$ documents.

A qualitative embedding analysis using Uniform Manifold Approximation and Projection (UMAP)~\cite{mcinnes2018umap}, shown in Fig.~\ref{fig:UMAP}, confirms that the augmented queries remain clustered around the original queries. This indicates minimal semantic drift and supports the use of the augmented corpus for simulation-based evaluation.
\begin{figure}
    \centering
\includegraphics[width=0.32\textwidth,height=0.19\textheight]{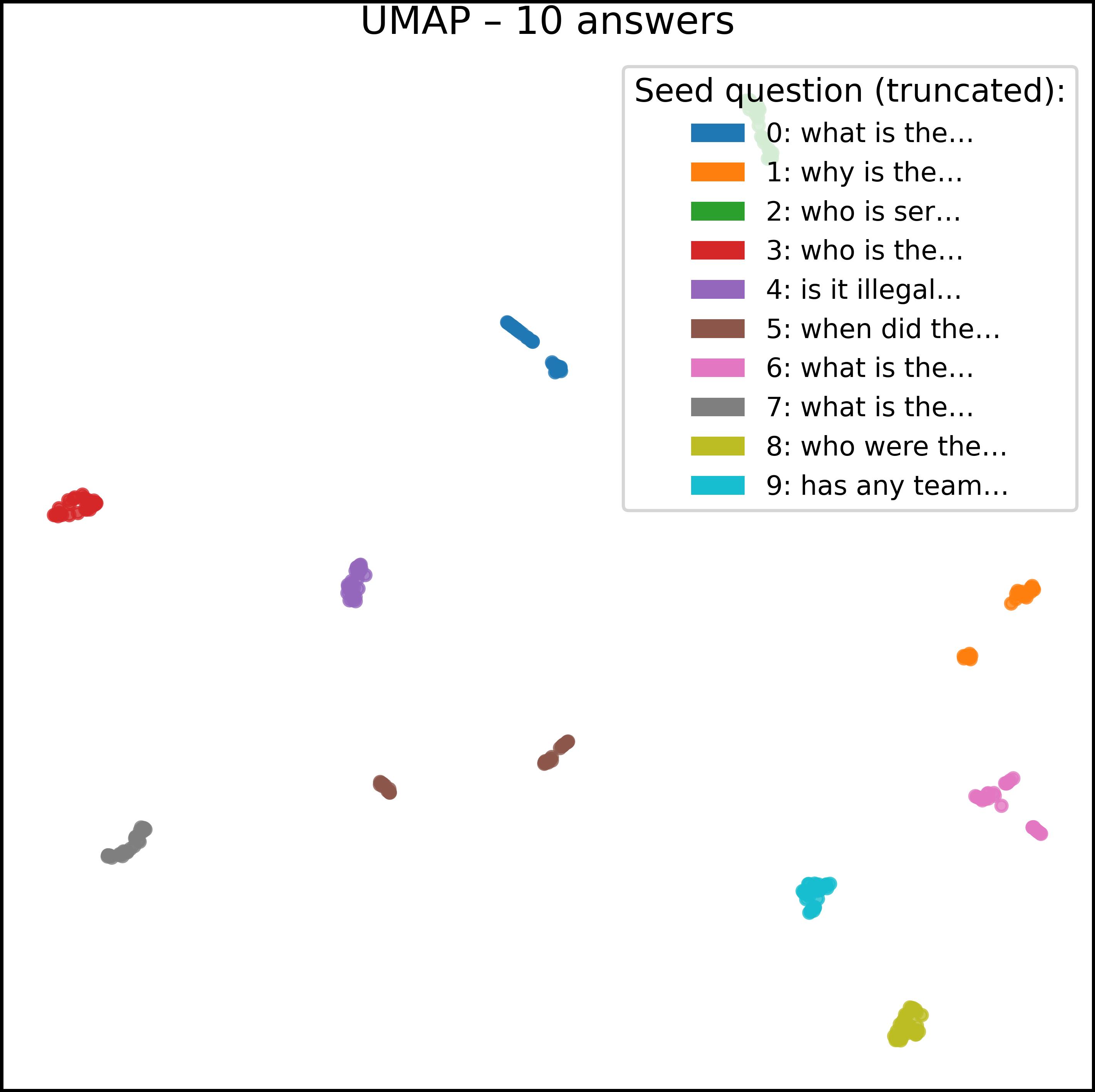}
    \caption{UMAP of generated question embeddings}
    \label{fig:UMAP}
\end{figure}
\vspace{-11pt}
\subsection{TF-IDF-Based Retrieval Results (Synthetic Data)} 
Figures~\ref{fig:error_synthetic_n_10} and~\ref{fig:error_synthetic_n_100} show the retrieval error probabilities obtained on synthetic data for $n=10$ and $n=100$, respectively. As expected, the error probability increases with the erasure probability~$\epsilon$ and decreases as the query length $L_q$ increases and the rate $R$ decreases, corresponding to higher redundancy. At \(\epsilon=1\), the error probability becomes one for all query lengths because, when retrieval is performed over the support of \(\mathbf v_q\), the receiver selects the document closest to the zero vector on that support, which is necessarily different from the ground-truth document.
Larger values of $L_q$ typically increase the number of active query coordinates $K_q$, so that the score margins aggregate a larger number of independent coordinatewise contributions. This leads to stronger concentration of the margins and improved separation between the ground-truth document and its competitors, which is consistent with the Gaussian approximation developed in \cref{sec: error_pr_MVN}.

The multivariate normal (MVN) approximation closely tracks the simulation results across most operating regimes. The largest discrepancies appear in the small-$\epsilon$ regime, where retrieval errors are rare, and the performance is governed by extreme tail events together with finite-$K_q$ effects. Among the computable analytic bounds shown in the figures, the \v{S}id\'ak bound is generally the tightest, improving on both the first- and third-order Bonferroni bounds over the plotted parameter ranges\footnote{The computable analytic bounds appear empirically tightest in the small-erasure regime, which is also the most practically relevant operating region.}.
\begin{figure*}[t]
\centering
\includegraphics[width=0.4\textwidth,height=0.17\textheight]{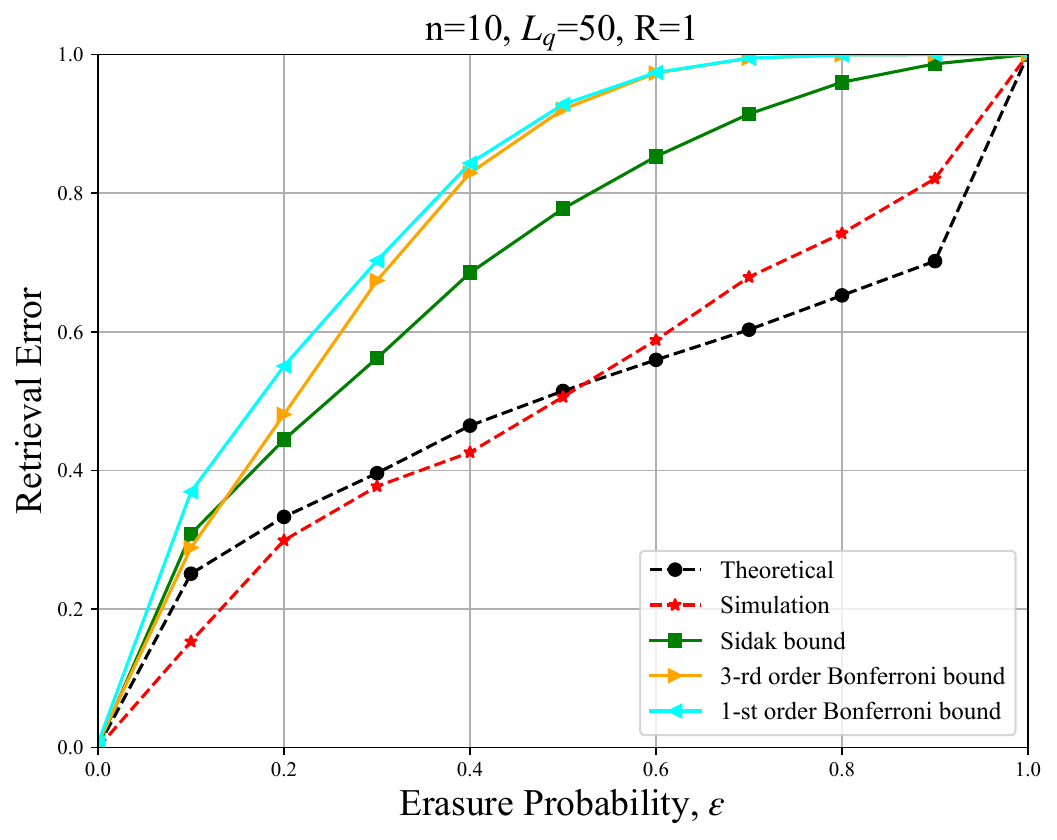}\hspace{0.02\textwidth}
\includegraphics[width=0.4\textwidth,height=0.17\textheight]{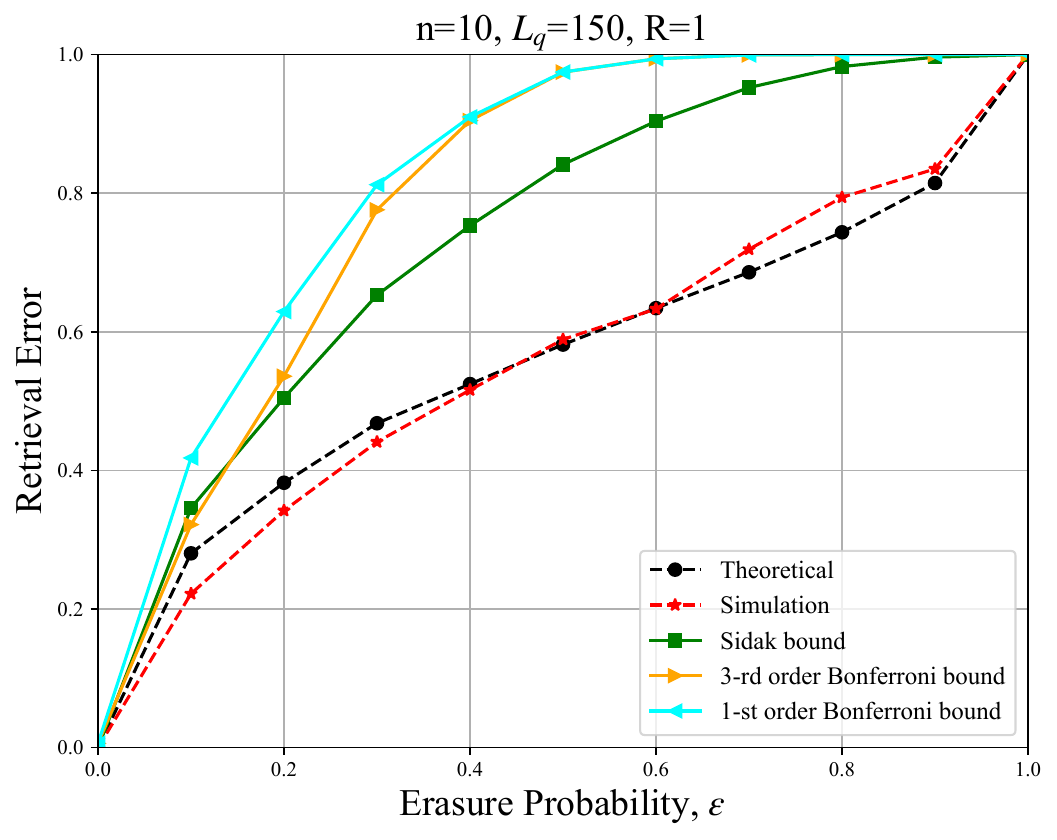}

\vspace{0.8em}

\includegraphics[width=0.4\textwidth,height=0.17\textheight]{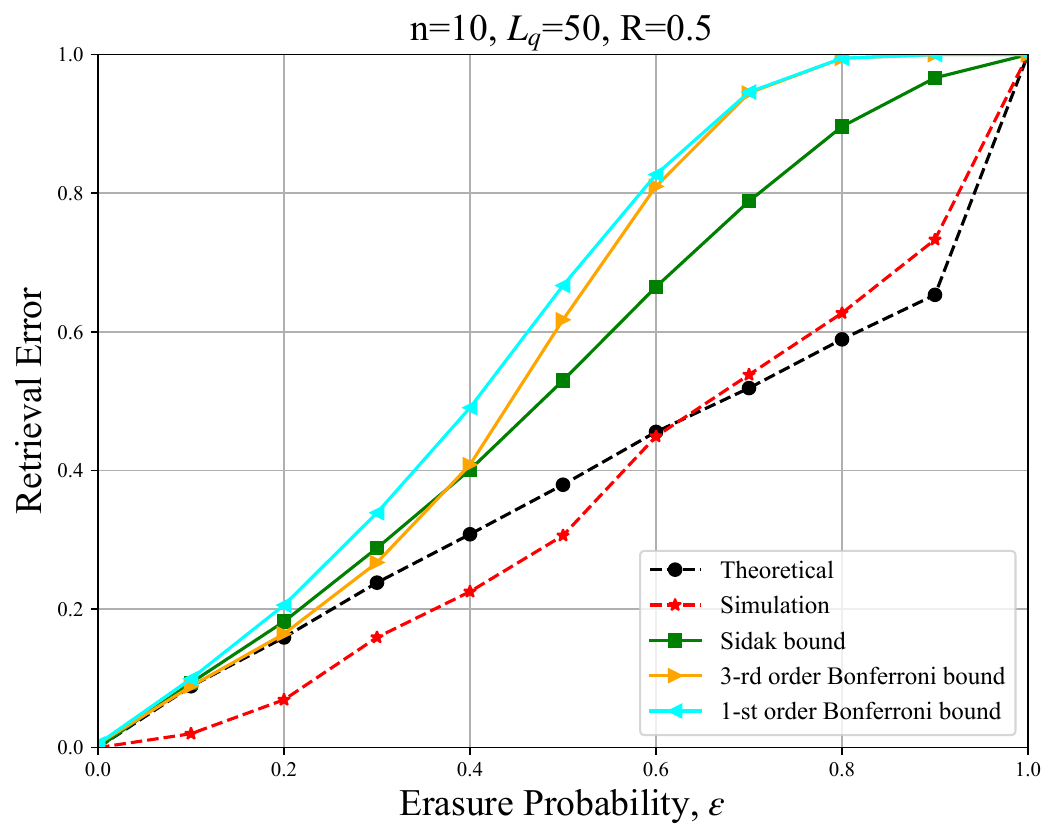}\hspace{0.02\textwidth}
\includegraphics[width=0.4\textwidth,height=0.17\textheight]{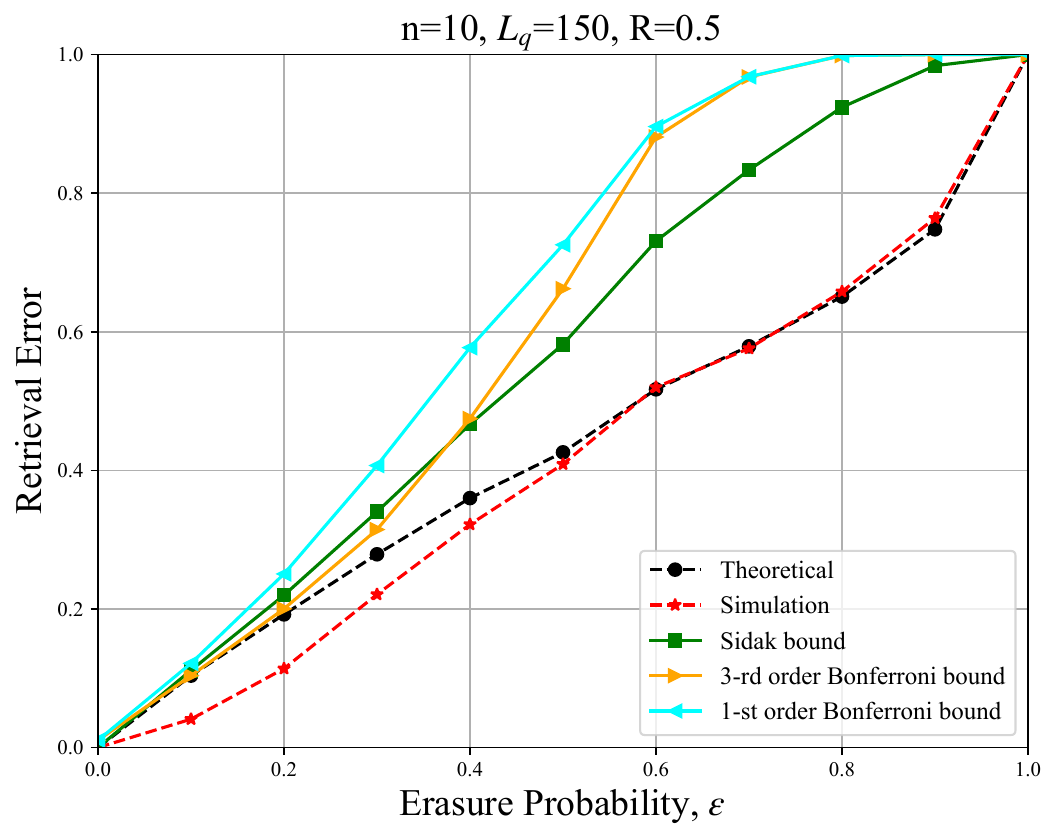}

\caption{Retrieval error probability versus erasure probability for $n=10$ documents, shown for different query lengths $L_q$ and data rates $R=1$ and $\frac{1}{2}$.}
\label{fig:error_synthetic_n_10}
\end{figure*}

\begin{figure*}[t]
\centering
\includegraphics[width=0.4\textwidth,height=0.17\textheight]{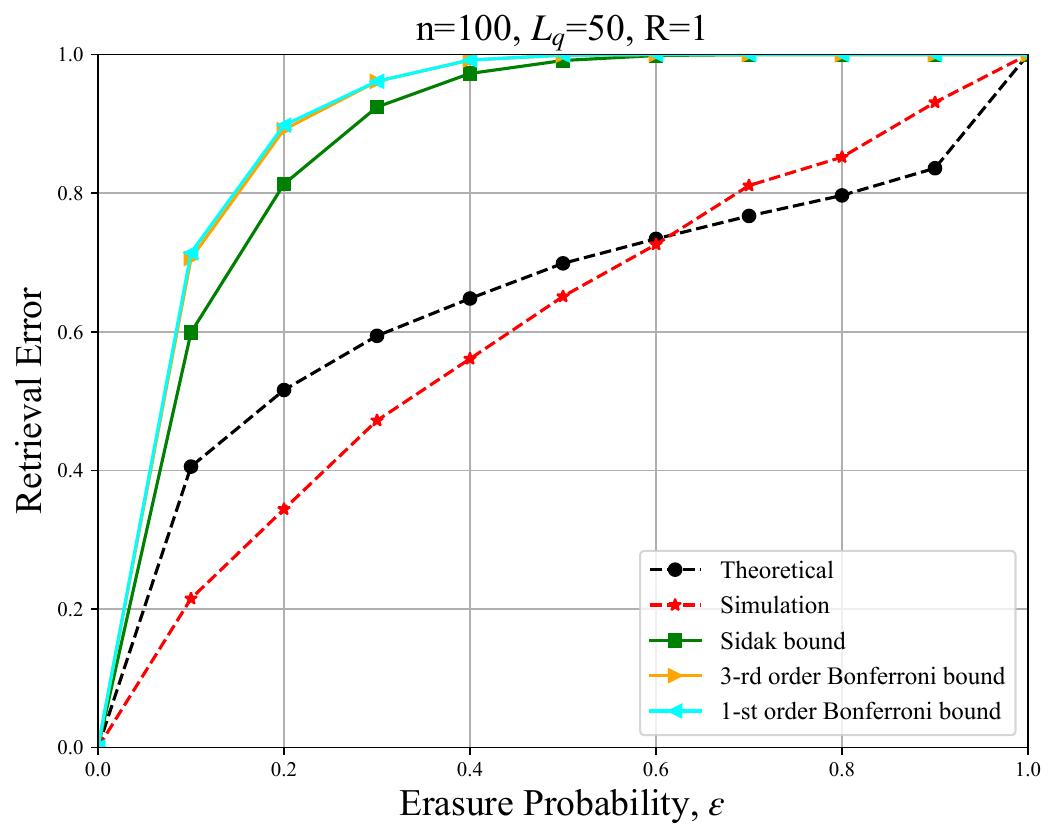}\hspace{0.02\textwidth}
\includegraphics[width=0.4\textwidth,height=0.17\textheight]{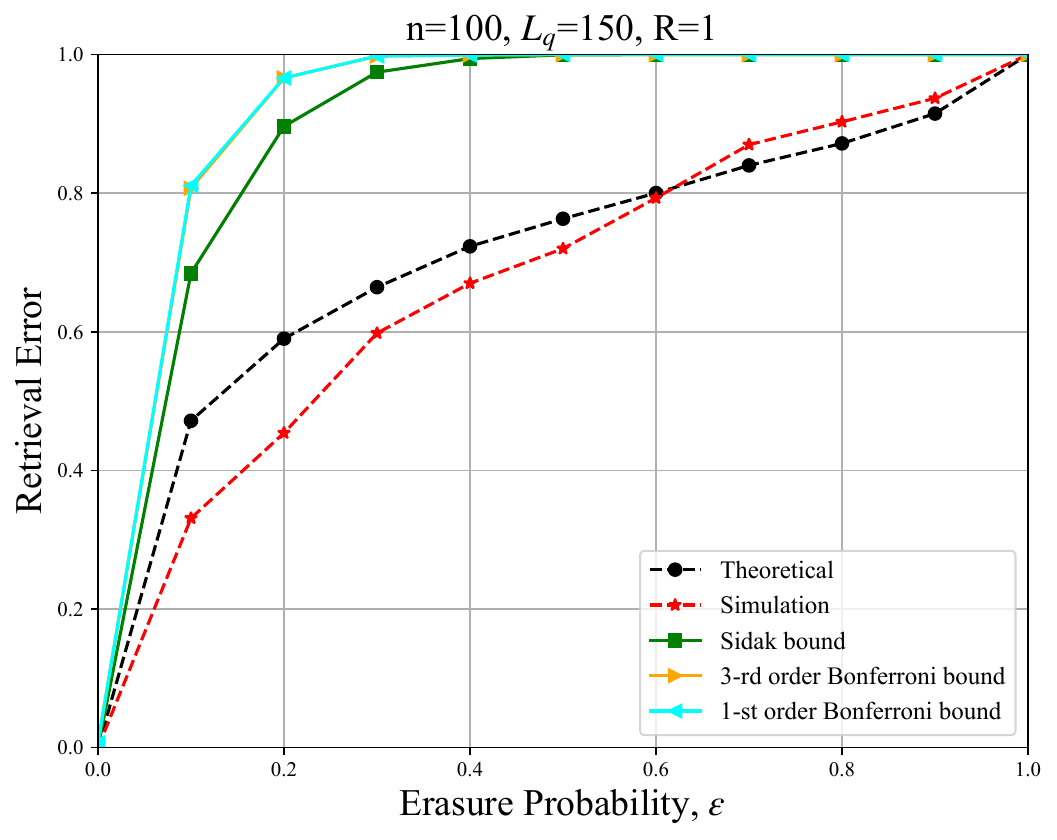}

\vspace{0.8em}

\includegraphics[width=0.4\textwidth,height=0.17\textheight]{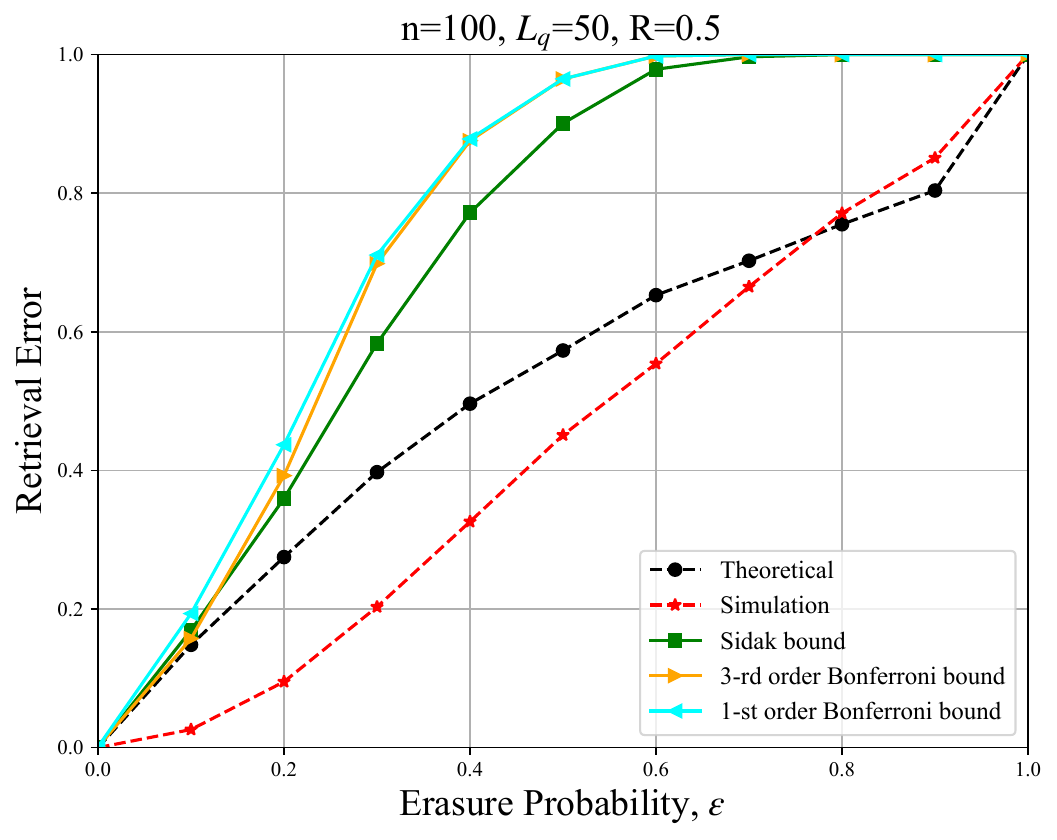}\hspace{0.02\textwidth}
\includegraphics[width=0.4\textwidth,height=0.17\textheight]{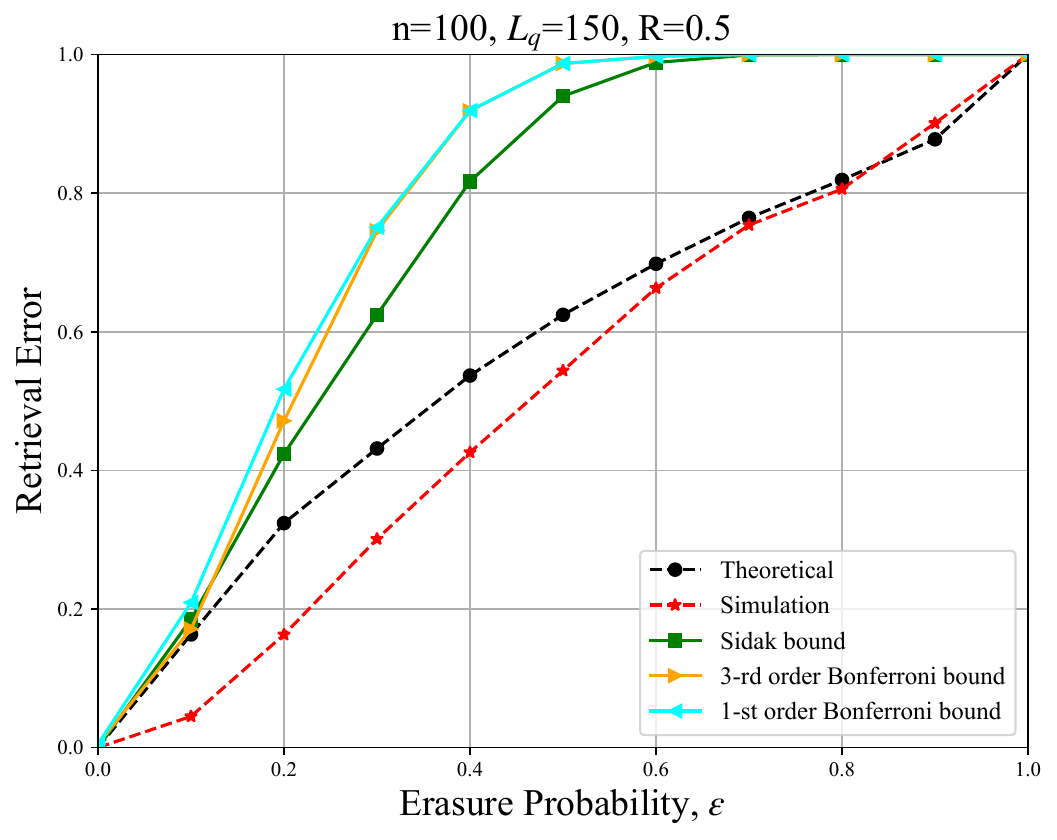}

\caption{Retrieval error probability versus erasure probability for $n=100$ documents, shown for different query lengths $L_q$ and data rates $R=1$ and $\frac{1}{2}$.}
\label{fig:error_synthetic_n_100}
\end{figure*}
\vspace{-10pt}
\subsection{Embedding-Based Retrieval Results (Data-Driven)}
In addition to the TF-IDF–based simulations, we evaluate a data-driven retrieval pipeline based on pretrained semantic embeddings. This experiment departs from the analytical model in Sec.~\ref{sec: error_pr_MVN} and is intended to assess whether semantic-aware repetition remains effective in a realistic embedding-based setting, rather than to validate the theoretical error bounds.
The empirical error probability is estimated over Monte Carlo trials, each consisting of semantic-importance estimation, optimized rate allocation, erasure realization, query reconstruction, and embedding-based retrieval. This data-driven evaluation complements the TF-IDF analysis by demonstrating the benefit of semantic-aware repetition under token-level erasures in modern embedding-based retrieval systems.

Fig.~\ref{fig:googleNQ_TFIDF} shows the retrieval error on the Google NQ corpus as a function of the erasure probability~$\epsilon$ under LLM-based retrieval. Performance is evaluated using Top-$K$ accuracy, where retrieval is considered correct if the ground-truth document appears among the top $K$ ranked results ($K=1,3,5$).
 As expected, the error increases monotonically with~$\epsilon$ and approaches one under severe erasures. Across all Top-$K$ metrics, the proposed importance-aware repetition scheme consistently outperforms the no-encoding baseline. 
Notably, even under compression ($R=2$), the semantic-aware scheme achieves lower error than the uncoded baseline ($R=1$), because redundancy is selectively assigned to the most informative query terms, whereas the baseline transmits each term only once with no protection against erasures. Increasing redundancy (smaller $R$) further reduces the error by increasing the probability that critical features survive the erasure channel. Allowing larger $K$ lowers the absolute error because the correct document is more likely to appear among multiple high-ranked candidates, while the relative performance ordering remains unchanged, indicating that semantic-aware protection improves the overall ranking quality rather than only the top result.
\begin{figure}
    \centering
    \includegraphics[width=0.37\textwidth,height=0.17\textheight]{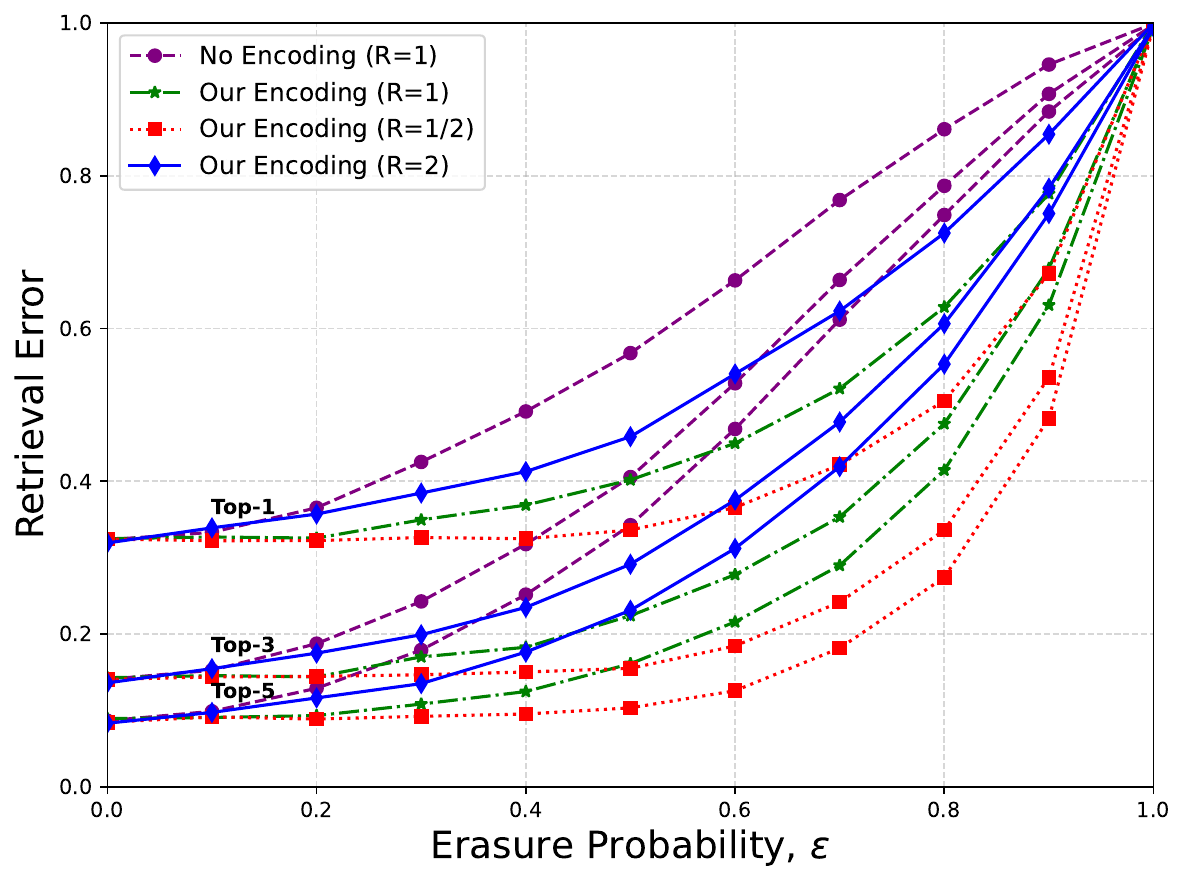}
    \caption{Google NQ ( intfloat/e5-large-v2) retrieval robustness under erasures: error probability as a function of $\epsilon$ for baseline and proposed rate-adaptive repetition schemes, evaluated for Top-$k$ retrieval.}
    \label{fig:googleNQ_TFIDF}
\end{figure}
\vspace{-6pt}
\section{Conclusion}
\label{sec: conclusion}
In conclusion, this paper establishes an information-theoretic framework for understanding remote document retrieval under token erasures and shows that importance-aware redundancy can significantly improve retrieval reliability. By deriving a multivariate Gaussian approximation for TF-IDF score margins and developing computable error bounds, the paper provides both theoretical insight and practical tools for analyzing retrieval performance. The numerical and data-driven results further show that protecting semantically important query components leads to more robust retrieval, not only in the analytical TF-IDF setting but also in modern embedding-based pipelines.
\appendices
\vspace{-5pt}
\section{Proof of Lemma~\ref{lem:lindeberg}}
\label{app:linbergcondition}
Condition on \(\mathbf v_q\), fix a unit vector \(\mathbf{u}\in\mathbb R^m\) and \(\eta>0\), and write
\[
a_i := \mathbf{u}^\top \boldsymbol\delta_i,
\quad
Z_i := \mathbf{u}^\top \mathbf Y_i = (e_i-p_i)a_i,
\quad i\in\mathcal S_q.
\]
where $ \boldsymbol\delta_i$ is defined in \eqref{eq:delta_i}. Then
\[
s_u^2
:=
\sum_{i\in\mathcal S_q}\mathrm{Var}(Z_i\mid \mathbf v_q)
=
\mathbf{u}^\top \boldsymbol\Sigma(\mathbf v_q)\,\mathbf{u}.
\]
If \(s_u^2=0\), then \(Z_i=0\) almost surely for every \(i\in\mathcal S_q\), and the directional Lindeberg ratio is \(0\) by convention. Hence it suffices to consider the case \(s_u^2>0\).

Let \(r_0\ge 1\) be a fixed finite integer, independent of $K_q$, and define
\[
G_{r_0}:=\{i\in \mathcal S_q:\ r_i\le r_0\},
\quad
B_{r_0}:= \mathcal S_q\setminus G_{r_0},
\]
where \(p_i=1-\epsilon^{r_i}\) and \(\epsilon\in(0,1)\) is fixed. 
Assume that the following technical regularity conditions, which are used to verify the directional Lindeberg condition, hold as \(K_q\to\infty\).

\begin{enumerate}
\item[(A1)] \emph{Singleton-dominated regime on the good set:}
there exist constants \(0<c_1\le c_2<\infty\), independent of \(K_q\), such that for all sufficiently large \(K_q\),
\[
\frac{c_1}{L_q}\le v_{q,i}\le \frac{c_2}{L_q}
\quad \text{for all } i\in G_{r_0},
\]
and \(|G_{r_0}|\to\infty\).

 (A1) captures a singleton-dominated sparse-query regime: in the TF-based model, normalized term-frequency weights for single occurrences are naturally of order \(1/L_q\), and (A1) requires a growing subset \(G_{r_0}\) of active coordinates to remain uniformly comparable to this scale.

\item[(A2)] \emph{Uniform boundedness of document differences and IDF weights:}
there exist constants \(C_d,C_\zeta<\infty\) such that for all \(i\in \mathcal S_q\) and all \(j\neq k\),
\[
|v_{d_k,i}-v_{d_j,i}|\le C_d,
\quad
\zeta_i \le C_\zeta.
\]
This assumption is a mild boundedness condition ensuring uniform control of the coordinates entering $\delta_i$. In standard TF-IDF settings, normalized document TF coordinates are bounded and IDF weights are finite for any finite corpus.
\item[(A3)] \emph{Directional non-degeneracy on the good set:}
for every fixed unit vector \(\mathbf{u}\in\mathbb{R}^m\), there exists a constant \(c_u>0\), independent of \(K_q\), such that for all sufficiently large \(K_q\),
\[
\sum_{i\in G_{r_0}} \bigl(\mathbf{u}^\top \boldsymbol\delta_i\bigr)^2
\ge
c_u \sum_{i\in G_{r_0}} v_{q,i}^2.
\]
(A3) ensures that, in every fixed projection direction \(\mathbf u\), the good-set vectors \(\boldsymbol\delta_i\) retain nontrivial projected energy. It rules out degenerate cases in which \(\mathbf u^\top \boldsymbol\delta_i\) becomes collectively too small, and is used to lower-bound the projected variance in the Lindeberg argument.

\item[(A4)] \emph{Tail \(\ell_2\)-negligibility:}
\[
\frac{\sum_{i\in B_{r_0}} v_{q,i}^2}
     {\sum_{i\in G_{r_0}} v_{q,i}^2}
\to 0.
\]
(A4) requires the \(\ell_2\)-mass of the large-repetition tail \(B_{r_0}\) to be negligible relative to that of the bounded-repetition set \(G_{r_0}\). Thus, the query energy is asymptotically dominated by the good set, and the bad-set contribution does not affect the Lindeberg verification.
\end{enumerate}

We show that
\[
\frac{1}{s_u^2}
\sum_{i\in \mathcal S_q}
\mathbb E\!\left[
Z_i^2\,\mathbf 1\{|Z_i|>\eta s_u\}
\,\middle|\, \mathbf v_q
\right]
\to 0,
\quad \text{as } K_q\to\infty,
\]
which is exactly \eqref{eq:lindeberg_directional}.

Since \(e_i\in\{0,1\}\) and \(p_i\in[0,1]\), we have \(|e_i-p_i|\le 1\) almost surely, and therefore
\begin{equation}
\label{eq:Zi_bound_appendix}
|Z_i|\le |a_i|
\quad \text{a.s.}
\end{equation}
Moreover, for each competitor \(j\neq k\),
$\delta_{j,i}=2\zeta_i^2 v_{q,i}(v_{d_k,i}-v_{d_j,i})$,
so by (A2),
$|\delta_{j,i}|
\le
2C_\zeta^2 C_d\, v_{q,i}$.
Hence
\[
\|\boldsymbol\delta_i\|_2^2
\le
m(2C_\zeta^2 C_d)^2 v_{q,i}^2.
\]
Since \(\|\mathbf{u}\|_2=1\), it follows that
\begin{equation}
\label{eq:ai_bound_appendix}
a_i^2=(\mathbf{u}^\top\boldsymbol\delta_i)^2
\le \|\boldsymbol\delta_i\|_2^2
\le K^2 v_{q,i}^2,
\quad
K:=2C_\zeta^2C_d\sqrt{m}.
\end{equation}

We now split the Lindeberg sum over the good and bad sets:
\[
\frac{1}{s_u^2}
\sum_{i\in \mathcal S_q}
\mathbb E\!\left[
Z_i^2\,\mathbf 1\{|Z_i|>\eta s_u\}
\,\middle|\, \mathbf v_q
\right]
=
T_G+T_B,
\]
where
\[
T_G
:=
\frac{1}{s_u^2}\sum_{i\in G_{r_0}}
\mathbb E\!\left[
Z_i^2\,\mathbf 1\{|Z_i|>\eta s_u\}
\,\middle|\, \mathbf v_q
\right],
\]
and
\[
T_B
:=
\frac{1}{s_u^2}\sum_{i\in B_{r_0}}
\mathbb E\!\left[
Z_i^2\,\mathbf 1\{|Z_i|>\eta s_u\}
\,\middle|\, \mathbf v_q
\right].
\]

\paragraph*{Contribution of the good set}
For \(i\in G_{r_0}\),
\[
p_i(1-p_i)\!=\!\epsilon^{r_i}(1-\epsilon^{r_i})
\!\ge\!
c_{\epsilon,r_0},
\quad
c_{\epsilon,r_0}\!\!:=\!\!\min_{1\le r\le r_0}\!\!\epsilon^r(1-\epsilon^r)\!>\!0.
\]
Therefore,
\[
s_u^2
=
\sum_{i\in \mathcal S_q} p_i(1-p_i)a_i^2
\ge
\sum_{i\in G_{r_0}} p_i(1-p_i)a_i^2
\ge
c_{\epsilon,r_0}\!\sum_{i\in G_{r_0}} a_i^2.
\]
Hence
\[
\max_{i\in G_{r_0}}\frac{a_i^2}{s_u^2}
\le
\frac{1}{c_{\epsilon,r_0}}
\frac{\max_{i\in G_{r_0}} a_i^2}{\sum_{j\in G_{r_0}} a_j^2}.
\]
Using \eqref{eq:ai_bound_appendix} and (A3),
\[
\frac{\max_{i\in G_{r_0}} a_i^2}{\sum_{j\in G_{r_0}} a_j^2}
\le
\frac{K^2 \max_{i\in G_{r_0}} v_{q,i}^2}
     {c_u \sum_{j\in G_{r_0}} v_{q,j}^2}.
\]
By (A1), for all sufficiently large \(K_q\),
\[
\max_{i\in G_{r_0}} v_{q,i}^2 \le \frac{c_2^2}{L_q^2},
\quad
\sum_{j\in G_{r_0}} v_{q,j}^2 \ge |G_{r_0}|\,\frac{c_1^2}{L_q^2}.
\]
Thus
\[
\frac{\max_{i\in G_{r_0}} a_i^2}{\sum_{j\in G_{r_0}} a_j^2}
\le
\frac{K^2 c_2^2}{c_u c_1^2}\cdot \frac{1}{|G_{r_0}|}
\to 0,
\]
since \(|G_{r_0}|\to\infty\). Consequently,
\[
\max_{i\in G_{r_0}}\frac{a_i^2}{s_u^2}\to 0.
\]
Therefore, for all sufficiently large \(K_q\),
\[
|a_i|\le \eta s_u
\quad \text{for all } i\in G_{r_0}.
\]
Combining this with \eqref{eq:Zi_bound_appendix}, we get
\[
\mathbf 1\{|Z_i|>\eta s_u\}=0
\quad \text{for all } i\in G_{r_0},
\]
for all sufficiently large \(K_q\). Hence \(T_G=0\) eventually.

\paragraph*{Contribution of the bad set}
Using \(Z_i^2\mathbf 1\{|Z_i|>\eta s_u\}\le Z_i^2\), we obtain
\[
T_B
\le
\frac{1}{s_u^2}\sum_{i\in B_{r_0}} \mathbb E[Z_i^2\mid \mathbf v_q]
=
\frac{\sum_{i\in B_{r_0}} p_i(1-p_i)a_i^2}{s_u^2}.
\]
Since \(p_i(1-p_i)\le 1/4\), \eqref{eq:ai_bound_appendix} yields
\[
\sum_{i\in B_{r_0}} p_i(1-p_i)a_i^2
\le
\frac14 \sum_{i\in B_{r_0}} a_i^2
\le
\frac{K^2}{4}\sum_{i\in B_{r_0}} v_{q,i}^2.
\]
For the denominator, restricting to \(G_{r_0}\) and using (A3) gives
\[
s_u^2
\ge
\sum_{i\in G_{r_0}} p_i(1-p_i)a_i^2
\ge
c_{\epsilon,r_0}\sum_{i\in G_{r_0}} a_i^2
\ge
c_{\epsilon,r_0}c_u \sum_{i\in G_{r_0}} v_{q,i}^2.
\]
Therefore,
\[
T_B
\le
\frac{K^2}{4c_{\epsilon,r_0}c_u}
\cdot
\frac{\sum_{i\in B_{r_0}} v_{q,i}^2}
     {\sum_{i\in G_{r_0}} v_{q,i}^2}
\to 0
\]
by (A4).

Combining the good- and bad-set bounds, we conclude that
\[
\frac{1}{s_u^2}
\sum_{i\in \mathcal S_q}
\mathbb E\!\left[
Z_i^2\,\mathbf 1\{|Z_i|>\eta s_u\}
\,\middle|\, \mathbf v_q
\right]
\to 0,
\quad \text{as } K_q\to\infty.
\]
This verifies the directional Lindeberg condition \eqref{eq:lindeberg_directional}.
\vspace{-5pt}
\section{Proof of Lemma~\ref{lem:cov_concentration}}
\label{app:covariance_positivity}

Fix two distinct competitors \(j\neq \ell\). Recall that
\begin{equation*}
\Sigma_{j\ell}(\mathbf v_q)
=
\sum_{i\in \mathcal S_q} p_i(1-p_i)\,\delta_{j,i}\delta_{\ell,i},
\end{equation*}
where
$\delta_{j,i} = 2\,\zeta_i^2 v_{q,i}\bigl(v_{d_k,i}-v_{d_j,i}\bigr)$.
Define
$a_i := 4\,p_i(1-p_i)\,\zeta_i^4 v_{q,i}^2$,
$W_i := (v_{d_k,i}-v_{d_j,i})(v_{d_k,i}-v_{d_\ell,i})$, and
$U_i := (v_{d_k,i}-v_{d_j,i})^2$, then
\begin{equation*}
\Sigma_{j\ell}(\mathbf v_q)=\sum_{i\in \mathcal S_q} a_i W_i,
\qquad
\Sigma_{jj}(\mathbf v_q)=\sum_{i\in \mathcal S_q} a_i U_i.
\end{equation*}

For each \(i\in\mathcal S_q\), the random variables \(v_{d_k,i}\), \(v_{d_j,i}\), and \(v_{d_\ell,i}\) are i.i.d.\ copies of a real random variable \(F\) with \(\mathbb E[F^2]<\infty\) and \(\mathrm{Var}(F)>0\), and the triples \((v_{d_k,i},v_{d_j,i},v_{d_\ell,i})\) are independent across \(i\). Let \(F_1,F_2,F_3\) be i.i.d.\ copies of \(F\). Then
\begin{equation*}
W_i \overset{d}= (F_1-F_2)(F_1-F_3),
\qquad
U_i \overset{d}= (F_1-F_2)^2,
\end{equation*}
so
\begin{equation*}
\mathbb E[W_i]
=
\mathbb E[(F_1-F_2)(F_1-F_3)]
=
\mathrm{Var}(F),
\end{equation*}
and
\begin{equation*}
\mathbb E[U_i]
=
\mathbb E[(F_1-F_2)^2]
=
2\,\mathrm{Var}(F).
\end{equation*}
In particular, \(W_i\) and \(U_i\) have finite first moments.

Now let
$A_{K_q}:=\sum_{i\in\mathcal S_q} a_i$.
By assumption,
\begin{equation*}
0<c_1 K_q \le A_{K_q}\le c_2 K_q,
\qquad
\sup_{i\in\mathcal S_q} a_i<\infty,
\end{equation*}
for all sufficiently large \(K_q\). Hence
\begin{equation*}
\frac{\max_{i\in\mathcal S_q} a_i}{A_{K_q}}
\le
\frac{\sup_{i\in\mathcal S_q} a_i}{c_1 K_q}
\xrightarrow[K_q\to\infty]{}0.
\end{equation*}
Therefore, by a weighted strong law of large numbers,
\begin{equation*}
\frac{\sum_{i\in S_q} a_i W_i}{A_{K_q}}
\xrightarrow[K_q\to\infty]{\mathrm{a.s.}}
\mathbb E[W_i]
=
\mathrm{Var}(F).
\end{equation*}
Likewise,
\begin{equation*}
\frac{\sum_{i\in S_q} a_i U_i}{A_{K_q}}
\xrightarrow[K_q\to\infty]{\mathrm{a.s.}}
\mathbb E[U_i]
=
2\,\mathrm{Var}(F).
\end{equation*}
Hence
\begin{equation*}
\begin{aligned}
\rho_{j\ell}(\mathbf v_q)
=
\frac{\Sigma_{j\ell}(\mathbf v_q)}
{\sqrt{\Sigma_{jj}(\mathbf v_q)\Sigma_{\ell\ell}(\mathbf v_q)}}
&\xrightarrow[K_q\to\infty]{\mathrm{a.s.}}\\
&\frac{\mathrm{Var}(F)}
{\sqrt{(2\,\mathrm{Var}(F))(2\,\mathrm{Var}(F))}}
=
\frac12.
\end{aligned}
\end{equation*}
In particular, \(\rho_{j\ell}(\mathbf v_q)>0\) for all sufficiently large \(K_q\), almost surely.
\hfill\IEEEQED




 




\bibliographystyle{IEEEtran}
\bibliography{Refs}

@article{ibrihich2022review,
  title={A Review on recent research in information retrieval},
  author={Ibrihich, S and Oussous, A and Ibrihich, O and Esghir, M},
  journal={Procedia Computer Science},
  volume={201},
  pages={777--782},
  year={2022},
  publisher={Elsevier}
}

@article{piantadosi2014zipf,
  title={Zipf’s word frequency law in natural language: A critical review and future directions},
  author={Piantadosi, Steven T},
  journal={Psychonomic Bulletin \& Review},
  volume={21},
  pages={1112--1130},
  year={2014},
  publisher={Springer}
}

@inproceedings{fox1989stop,
  title={A stop list for general text},
  author={Fox, Christopher},
  booktitle={Acm Sigir Forum},
  volume={24},
  number={1-2},
  pages={19--21},
  year={1989},
  organization={ACM New York, NY, USA}
}

@article{dimitrakis2020survey,
  title={A survey on question answering systems over linked data and documents},
  author={Dimitrakis, Eleftherios and Sgontzos, Konstantinos and Tzitzikas, Yannis},
  journal={Journal of Intelligent Information Systems},
  volume={55},
  number={2},
  pages={233--259},
  year={2020},
  publisher={Springer}
}

@article{zhu2021retrieving,
  title={Retrieving and reading: {A} comprehensive survey on open-domain question answering},
  author={Zhu, Fengbin and Lei, Wenqiang and Wang, Chao and Zheng, Jianming and Poria, Soujanya and Chua, Tat-Seng},
  journal={arXiv preprint arXiv:2101.00774},
  year={2021}
}

@inproceedings{mishra2015analysis,
  title={Analysis of {TF-IDF} model and its variant for document retrieval},
  author={Mishra, Apra and Vishwakarma, Santosh},
  booktitle={2015 
             international Conference on Computational Intelligence and Communication Networks (CICN)},
  pages={772--776},
  year={2015},
  organization={}
}

@article{gao2023retrieval,
  title={Retrieval-augmented generation for large language models: A survey},
  author={Gao, Yunfan and Xiong, Yun and Gao, Xinyu and Jia, Kangxiang and Pan, Jinliu and Bi, Yuxi and Dai, Yi and Sun, Jiawei and Wang, Haofen},
  journal={arXiv preprint arXiv:2312.10997},
  year={2023}
}

@inproceedings{salemi2024evaluating,
  title={Evaluating retrieval quality in retrieval-augmented generation},
  author={Salemi, Alireza and Zamani, Hamed},
  booktitle={Proceedings of the 47th International ACM SIGIR Conference on Research and Development in Information Retrieval},
  pages={2395--2400},
  year={2024}
}

@article{li2024enhancing,
  title={Enhancing llm factual accuracy with rag to counter hallucinations: A case study on domain-specific queries in private knowledge-bases},
  author={Li, Jiarui and Yuan, Ye and Zhang, Zehua},
  journal={arXiv preprint arXiv:2403.10446},
  year={2024}
}

@book{buttcher2016information,
  title={Information retrieval: {I}mplementing and evaluating search engines},
  author={Buttcher, Stefan and Clarke, Charles LA and Cormack, Gordon V},
  year={2016},
  publisher={{MIT} Press}
}

@article{siriwardhana2023improving,
  title={Improving the domain adaptation of retrieval augmented generation ({RAG}) models for open domain question answering},
  author={Siriwardhana, Shamane and Weerasekera, Rivindu and Wen, Elliott and Kaluarachchi, Tharindu and Rana, Rajib and Nanayakkara, Suranga},
  journal={Transactions of the Association for Computational Linguistics},
  volume={11},
  pages={1--17},
  year={2023},
  publisher={MIT Press One Broadway, 12th Floor, Cambridge, Massachusetts 02142, USA~…}
}

@article{lu2023semantics,
  title={Semantics-empowered communications: A tutorial-cum-survey},
  author={Lu, Zhilin and Li, Rongpeng and Lu, Kun and Chen, Xianfu and Hossain, Ekram and Zhao, Zhifeng and Zhang, Honggang},
  journal={IEEE Communications Surveys \& Tutorials},
  year={2023},
  publisher={IEEE}
}

@article{luo2022semantic,
  title={Semantic communications: Overview, open issues, and future research directions},
  author={Luo, Xuewen and Chen, Hsiao-Hwa and Guo, Qing},
  journal={IEEE Wireless Communications},
  volume={29},
  number={1},
  pages={210--219},
  year={2022},
  publisher={IEEE}
}

@article{shi2021semantic,
  title={From semantic communication to semantic-aware networking: Model, architecture, and open problems},
  author={Shi, Guangming and Xiao, Yong and Li, Yingyu and Xie, Xuemei},
  journal={IEEE Communications Magazine},
  volume={59},
  number={8},
  pages={44--50},
  year={2021},
  publisher={IEEE}
}

@article{lan2021semantic,
  title={What is semantic communication? {A} view on conveying meaning in the era of machine intelligence},
  author={Lan, Qiao and Wen, Dingzhu and Zhang, Zezhong and Zeng, Qunsong and Chen, Xu and Popovski, Petar and Huang, Kaibin},
  journal={Journal of Communications and Information Networks},
  volume={6},
  number={4},
  pages={336--371},
  year={2021},
  publisher={PTP}
}

@article{yang2022semantic,
  title={Semantic communications for future internet: Fundamentals, applications, and challenges},
  author={Yang, Wanting and Du, Hongyang and Liew, Zi Qin and Lim, Wei Yang Bryan and Xiong, Zehui and Niyato, Dusit and Chi, Xuefen and Shen, Xuemin and Miao, Chunyan},
  journal={IEEE Communications Surveys \& Tutorials},
  volume={25},
  number={1},
  pages={213--250},
  year={2022},
  publisher={IEEE}
}

@article{chaccour2024less,
  title={Less data, more knowledge: Building next generation semantic communication networks},
  author={Chaccour, Christina and Saad, Walid and Debbah, M{\'e}rouane and Han, Zhu and Poor, H Vincent},
  journal={IEEE Communications Surveys \& Tutorials},
  year={2024},
  publisher={IEEE}
}

@article{kwiatkowski2019natural,
  title={Natural questions: a benchmark for question answering research},
  author={Kwiatkowski, Tom and Palomaki, Jennimaria and Redfield, Olivia and Collins, Michael and Parikh, Ankur and Alberti, Chris and Epstein, Danielle and Polosukhin, Illia and Devlin, Jacob and Lee, Kenton and others},
  journal={Transactions of the Association for Computational Linguistics},
  volume={7},
  pages={453--466},
  year={2019},
  publisher={MIT Press One Rogers Street, Cambridge, MA 02142-1209, USA journals-info~…}
}

@article{dubey2024llama,
  title={The llama 3 herd of models},
  author={Dubey, Abhimanyu and Jauhri, Abhinav and Pandey, Abhinav and Kadian, Abhishek and Al-Dahle, Ahmad and Letman, Aiesha and Mathur, Akhil and Schelten, Alan and Yang, Amy and Fan, Angela and others},
  journal={arXiv e-prints},
  pages={arXiv--2407},
  year={2024}
}

@article{guo2023semantic,
  title={Semantic importance-aware communications using pre-trained language models},
  author={Guo, Shuaishuai and Wang, Yanhu and Li, Shujing and Saeed, Nasir},
  journal={IEEE Communications Letters},
  volume={27},
  number={9},
  pages={2328--2332},
  year={2023},
  publisher={IEEE}
}

@inproceedings{liu2025text,
  title={Text-Guided Token Communication for Wireless Image Transmission},
  author={Liu, Bole and Qiao, Li and Wang, Ye and Gao, Zhen and Ma, Yu and Ying, Keke and Qin, Tong},
  booktitle={2025 IEEE/CIC International Conference on Communications in China (ICCC)},
  pages={1--6},
  year={2025},
  organization={}
}

@article{qiao2025token,
  title={Token communications: A large model-driven framework for cross-modal context-aware semantic communications},
  author={Qiao, Li and Mashhadi, Mahdi Boloursaz and Gao, Zhen and Tafazolli, Rahim and Bennis, Mehdi and Niyato, Dusit},
  journal={IEEE Wireless Communications},
  volume={32},
  number={5},
  pages={80--88},
  year={2025},
  publisher={IEEE}
}

@article{vsidak1967rectangular,
  title={Rectangular confidence regions for the means of multivariate normal distributions},
  author={{\v{S}}id{\'a}k, Zbyn{\v{e}}k},
  journal={Journal of the American Statistical Association},
  volume={62},
  number={318},
  pages={626--633},
  year={1967},
  publisher={Taylor \& Francis}
}

@book{feller1991introduction,
  title={An Introduction to Probability Theory and its Applications, Volume 2},
  author={Feller, William},
  volume={2},
  year={1991},
  publisher={John Wiley \& Sons}
}

@book{van2000asymptotic,
  title={Asymptotic statistics},
  author={Van der Vaart, Aad W},
  volume={3},
  year={2000},
  publisher={Cambridge University Press}
}

@inproceedings{ghasvarianjahromi2025context,
  title={Context-Aware Search and Retrieval Over Erasure Channels},
  author={Ghasvarianjahromi, Sara and Yakimenka, Yauhen and Kliewer, J{\"o}rg},
  booktitle={2025 IEEE Information Theory Workshop (ITW)},
  pages={821--826},
  year={2025},
  organization={}
}

@article{knollmeyer2025hybrid,
  title={Hybrid Retrieval for Retrieval Augmented Generation in the German Language Production Domain},
  author={Knollmeyer, Simon and Pfaff, Sebastian and Akmal, Muhammad Uzair and Koval, Leonid and Asif, Saara and Mathias, Selvine G and Gro{\ss}mann, Daniel},
  journal={Journal of Advances in Information Technology},
  volume={16},
  number={6},
  year={2025}
}

@inproceedings{xian2024vector,
  title={Vector search with OpenAI embeddings: Lucene is all you need},
  author={Xian, Jasper and Teofili, Tommaso and Pradeep, Ronak and Lin, Jimmy},
  booktitle={Proceedings of the 17th ACM International Conference on Web Search and Data Mining},
  pages={1090--1093},
  year={2024}
}

@inproceedings{lin2025gosling,
  title={Gosling Grows Up: Retrieval with Learned Dense and Sparse Representations Using Anserini},
  author={Lin, Jimmy and Chen, Arthur Haonan and Lassance, Carlos and Ma, Xueguang and Pradeep, Ronak and Teofili, Tommaso and Xian, Jasper and Yang, Jheng-Hong and Zhong, Brayden and Zhong, Vincent},
  booktitle={Proceedings of the 48th International ACM SIGIR Conference on Research and Development in Information Retrieval},
  pages={3223--3233},
  year={2025}
}

@article{wang2025balancing,
  title={Balancing the Blend: An Experimental Analysis of Trade-offs in Hybrid Search},
  author={Wang, Mengzhao and Tan, Boyu and Gao, Yunjun and Jin, Hai and Zhang, Yingfeng and Ke, Xiangyu and Xu, Xiaoliang and Zhu, Yifan},
  journal={arXiv preprint arXiv:2508.01405},
  year={2025}
}

@inproceedings{reimers2019sentence,
  title={Sentence-bert: Sentence embeddings using siamese bert-networks},
  author={Reimers, Nils and Gurevych, Iryna},
  booktitle={Proceedings of the 2019 Conference on Empirical Methods in Natural Language Processing and the 9th International Joint Conference on Natural Language Processing (EMNLP-IJCNLP)},
  pages={3982--3992},
  year={2019}
}

@article{mikolov2013efficient,
  title={Efficient estimation of word representations in vector space},
  author={Mikolov, Tomas and Chen, Kai and Corrado, Greg and Dean, Jeffrey},
  journal={arXiv preprint arXiv:1301.3781},
  year={2013}
}

@book{bellman1957dynamic,
  author    = {Richard Bellman},
  title     = {Dynamic Programming},
  publisher = {Princeton University Press},
  year      = {1957}
}

@book{billingsley2013convergence,
  title={Convergence of Probability Measures},
  author={Billingsley, Patrick},
  year={2013},
  publisher={John Wiley \& Sons}
}

@article{mcinnes2018umap,
  title={{UMAP}: Uniform manifold approximation and projection for dimension reduction},
  author={McInnes, Leland and Healy, John and Melville, James},
  journal={arXiv preprint arXiv:1802.03426},
  year={2018}
}
\end{document}